\documentclass[10pt]{article}
\usepackage[margin=0.7in]{geometry}
\usepackage{amsmath,amssymb,amsthm,graphicx,booktabs,array,tabularx,float,xcolor,listings}
\newtheorem{proposition}{Proposition}
\usepackage[T1]{fontenc}
\usepackage{lmodern}
\usepackage{quantikz}
\definecolor{NicePurple}{rgb}{0.3,0.1,1}
\usepackage[colorlinks=true,citecolor=NicePurple,linkcolor=NicePurple,urlcolor=NicePurple]{hyperref}
\definecolor{CiteRed}{HTML}{B7204D}
\definecolor{CiteOrange}{HTML}{AF4E16}
\definecolor{CiteGold}{HTML}{966500}
\definecolor{CiteGreen}{HTML}{007C55}
\definecolor{CiteTeal}{HTML}{007689}
\definecolor{CiteBlue}{HTML}{345BD0}
\colorlet{CiteViolet}{NicePurple}

\makeatletter
\providecommand{\RainbowCiteCount}{1}
\newcommand{\SetRainbowCiteCount}[1]{%
  \gdef\RainbowCiteCount{#1}%
  \if@filesw
    \immediate\write\@auxout{\string\gdef\string\RainbowCiteCount{#1}}%
  \fi
}
\newcommand{\RainbowSetColor}[1]{%
  \ifnum\RainbowCiteCount>1
    \@tempcnta=\numexpr600*(#1-1)/(\RainbowCiteCount-1)\relax
  \else
    \@tempcnta=600
  \fi
  \ifnum\@tempcnta<100
    \@tempcnta=\numexpr100-\@tempcnta\relax
    \color{CiteRed!\the\@tempcnta!CiteOrange}%
  \else\ifnum\@tempcnta<200
    \@tempcnta=\numexpr200-\@tempcnta\relax
    \color{CiteOrange!\the\@tempcnta!CiteGold}%
  \else\ifnum\@tempcnta<300
    \@tempcnta=\numexpr300-\@tempcnta\relax
    \color{CiteGold!\the\@tempcnta!CiteGreen}%
  \else\ifnum\@tempcnta<400
    \@tempcnta=\numexpr400-\@tempcnta\relax
    \color{CiteGreen!\the\@tempcnta!CiteTeal}%
  \else\ifnum\@tempcnta<500
    \@tempcnta=\numexpr500-\@tempcnta\relax
    \color{CiteTeal!\the\@tempcnta!CiteBlue}%
  \else
    \@tempcnta=\numexpr600-\@tempcnta\relax
    \color{CiteBlue!\the\@tempcnta!CiteViolet}%
  \fi\fi\fi\fi\fi
}

\makeatother

\newcommand{\CZTotalShots}{1,505,124,640}
\newcommand{\CZMaxBasisShotImbalancePercent}{5.12}

\newcommand{\CZTopologyText}{planar and torus}

\newcommand{\PlanarCrumbleLink}{\href{https://algassert.com/crumble\#circuit=Q(1.03033,\_6.33363)\_0;Q(1.56066,\_6.86396)\_1;Q(2.09099,\_7.39429)\_2;Q(1.03033,\_4.21231)\_3;Q(1.56066,\_4.74264)\_4;Q(2.09099,\_5.27297)\_5;Q(2.62132,\_5.8033)\_6;Q(3.15165,\_6.33363)\_7;Q(1.03033,\_2.09099)\_8;Q(1.56066,\_2.62132)\_9;Q(2.09099,\_3.15165)\_10;Q(2.62132,\_3.68198)\_11;Q(3.15165,\_4.21231)\_12;Q(3.68198,\_4.74264)\_13;Q(4.21231,\_5.27297)\_14;Q(2.09099,\_1.03033)\_15;Q(2.62132,\_1.56066)\_16;Q(3.15165,\_2.09099)\_17;Q(3.68198,\_2.62132)\_18;Q(4.21231,\_3.15165)\_19;Q(4.74264,\_3.68198)\_20;Q(5.27297,\_4.21231)\_21;Q(4.21231,\_1.03033)\_22;Q(4.74264,\_1.56066)\_23;Q(5.27297,\_2.09099)\_24;Q(5.8033,\_2.62132)\_25;Q(6.33363,\_3.15165)\_26;Q(6.33363,\_1.03033)\_27;Q(6.86396,\_1.56066)\_28;Q(7.39429,\_2.09099)\_29;Q(8.45495,\_1.03033)\_30;Q(0.5,\_4.74264)\_31;Q(1.03033,\_5.27297)\_32;Q(1.56066,\_5.8033)\_33;Q(2.09099,\_6.33363)\_34;Q(2.62132,\_6.86396)\_35;Q(3.15165,\_7.39429)\_36;Q(1.56066,\_3.68198)\_37;Q(2.09099,\_4.21231)\_38;Q(2.62132,\_4.74264)\_39;Q(3.15165,\_5.27297)\_40;Q(3.68198,\_5.8033)\_41;Q(1.56066,\_1.56066)\_42;Q(2.09099,\_2.09099)\_43;Q(2.62132,\_2.62132)\_44;Q(3.15165,\_3.15165)\_45;Q(3.68198,\_3.68198)\_46;Q(4.21231,\_4.21231)\_47;Q(4.74264,\_4.74264)\_48;Q(5.27297,\_5.27297)\_49;Q(2.62132,\_0.5)\_50;Q(3.15165,\_1.03033)\_51;Q(3.68198,\_1.56066)\_52;Q(4.21231,\_2.09099)\_53;Q(4.74264,\_2.62132)\_54;Q(5.27297,\_3.15165)\_55;Q(5.8033,\_3.68198)\_56;Q(5.8033,\_1.56066)\_57;Q(6.33363,\_2.09099)\_58;Q(6.86396,\_2.62132)\_59;Q(7.39429,\_3.15165)\_60;Q(6.86396,\_0.5)\_61;Q(7.39429,\_1.03033)\_62;Q(7.92462,\_1.56066)\_63;POLYGON(0,0,1,0.5)\_0\_5\_4\_3;POLYGON(1,0,0,0.5)\_3\_4\_11\_10;POLYGON(0,0,1,0.5)\_9\_10\_11\_12\_19\_18\_17\_16;POLYGON(1,0,0,0.5)\_8\_9\_16\_15;POLYGON(0,1,0,0.5)\_15\_16\_17\_22;POLYGON(1,0,0,0.5)\_17\_18\_23\_22;POLYGON(0,0,1,0.5)\_1\_2\_7\_6;POLYGON(1,0,0,0.5)\_0\_1\_6\_5;POLYGON(0,1,0,0.5)\_4\_5\_6\_7\_14\_13\_12\_11;POLYGON(0,0,1,0.5)\_13\_14\_21\_20;POLYGON(1,0,0,0.5)\_12\_13\_20\_19;POLYGON(0,1,0,0.5)\_18\_19\_20\_21\_26\_25\_24\_23;POLYGON(0,0,1,0.5)\_25\_26\_29\_28;POLYGON(1,0,0,0.5)\_24\_25\_28\_27;POLYGON(0,1,0,0.5)\_27\_28\_29\_30;TICK;RX\_31;R\_32;RX\_33;R\_34;RX\_35;R\_36\_37;RX\_38\_39;R\_40\_41\_42;RX\_43\_44;R\_45;RX\_46;R\_47;RX\_48;R\_49;RX\_50;R\_51;RX\_52;R\_53;RX\_54;R\_55\_56\_57;RX\_58\_59;R\_60\_61;RX\_62;R\_63;MARKX(0)\_39;MARKZ(1)\_45;TICK;CX\_31\_32\_33\_34\_38\_37\_39\_40\_43\_42\_44\_45\_46\_47\_48\_49\_50\_51\_52\_53\_54\_55\_58\_57\_62\_61;TICK;CX\_31\_3\_32\_33\_35\_34\_39\_38\_40\_41\_42\_8\_44\_17\_45\_46\_48\_47\_51\_52\_54\_53\_55\_56\_57\_27\_59\_58;TICK;CX\_32\_4\_33\_0\_35\_36\_39\_12\_40\_6\_41\_14\_42\_15\_44\_43\_45\_18\_46\_19\_47\_13\_51\_16\_52\_17\_53\_23\_55\_25\_56\_21\_58\_28\_59\_29\_61\_27;TICK;CX\_33\_5\_34\_6\_35\_2\_38\_4\_40\_13\_41\_7\_43\_9\_44\_10\_45\_11\_46\_12\_48\_14\_52\_22\_53\_18\_54\_24\_55\_20\_56\_26\_58\_25\_59\_60\_62\_61;TICK;CX\_32\_33\_34\_1\_35\_7\_37\_10\_38\_11\_39\_5\_40\_41\_43\_44\_45\_46\_47\_20\_48\_21\_51\_52\_54\_53\_55\_56\_59\_58\_62\_63;TICK;CX\_4\_32\_35\_34\_37\_3\_39\_38\_41\_40\_43\_16\_44\_45\_48\_47\_52\_51\_54\_19\_56\_55\_59\_26\_62\_28\_63\_29;TICK;CX\_3\_31\_7\_41\_9\_43\_13\_40\_17\_52\_21\_56\_25\_55\_26\_59\_34\_35\_38\_39\_45\_44\_47\_48\_53\_54\_57\_24;TICK;CX\_4\_38\_5\_39\_6\_34\_7\_35\_13\_47\_18\_53\_19\_54\_21\_48\_22\_52\_24\_57\_33\_32\_41\_40\_44\_43\_46\_45\_50\_15\_56\_55\_58\_59\_61\_62;TICK;CX\_0\_33\_1\_34\_2\_35\_7\_41\_10\_44\_11\_45\_12\_46\_15\_42\_16\_43\_20\_47\_21\_56\_23\_53\_24\_54\_25\_58\_27\_61\_29\_59\_31\_32\_52\_51\_63\_30;TICK;CX\_3\_37\_5\_33\_6\_40\_8\_42\_11\_38\_12\_39\_14\_41\_15\_50\_16\_51\_18\_45\_19\_46\_20\_55\_26\_56\_27\_57\_28\_58\_29\_63\_34\_35\_43\_44\_47\_48\_53\_54\_61\_62;TICK;CX\_10\_37\_14\_48\_17\_44\_28\_62\_30\_63\_33\_32\_38\_39\_41\_40\_43\_42\_46\_45\_50\_51\_52\_53\_56\_55\_58\_59;TICK;CX\_14\_41\_26\_56\_27\_61\_33\_34\_35\_36\_38\_37\_39\_40\_44\_45\_46\_47\_48\_49\_54\_55\_58\_57\_59\_60\_62\_63;TICK;MX\_31;M\_32;MX\_33;M\_34;MX\_35;M\_36\_37;MX\_38\_39;M\_40\_41\_42;MX\_43\_44;M\_45;MX\_46;M\_47;MX\_48;M\_49;MX\_50;M\_51;MX\_52;M\_53;MX\_54;M\_55\_56\_57;MX\_58\_59;M\_60\_61;MX\_62;M\_63;MARKX(0)\_33\_39\_46;MARKZ(1)\_45;TICK;RX\_31;R\_32;RX\_33;R\_34;RX\_35;R\_36\_37;RX\_38\_39;R\_40\_41\_42;RX\_43\_44;R\_45;RX\_46;R\_47;RX\_48;R\_49;RX\_50;R\_51;RX\_52;R\_53;RX\_54;R\_55\_56\_57;RX\_58\_59;R\_60\_61;RX\_62;R\_63;MARKX(0)\_33\_39\_46;MARKZ(1)\_45;TICK;CX\_14\_41\_26\_56\_27\_61\_33\_34\_35\_36\_38\_37\_39\_40\_44\_45\_46\_47\_48\_49\_54\_55\_58\_57\_59\_60\_62\_63;TICK;CX\_10\_37\_14\_48\_17\_44\_28\_62\_30\_63\_33\_32\_38\_39\_41\_40\_43\_42\_46\_45\_50\_51\_52\_53\_56\_55\_58\_59;TICK;CX\_3\_37\_5\_33\_6\_40\_8\_42\_11\_38\_12\_39\_14\_41\_15\_50\_16\_51\_18\_45\_19\_46\_20\_55\_26\_56\_27\_57\_28\_58\_29\_63\_34\_35\_43\_44\_47\_48\_53\_54\_61\_62;TICK;CX\_0\_33\_1\_34\_2\_35\_7\_41\_10\_44\_11\_45\_12\_46\_15\_42\_16\_43\_20\_47\_21\_56\_23\_53\_24\_54\_25\_58\_27\_61\_29\_59\_31\_32\_52\_51\_63\_30;TICK;CX\_4\_38\_5\_39\_6\_34\_7\_35\_13\_47\_18\_53\_19\_54\_21\_48\_22\_52\_24\_57\_33\_32\_41\_40\_44\_43\_46\_45\_50\_15\_56\_55\_58\_59\_61\_62;TICK;CX\_3\_31\_7\_41\_9\_43\_13\_40\_17\_52\_21\_56\_25\_55\_26\_59\_34\_35\_38\_39\_45\_44\_47\_48\_53\_54\_57\_24;TICK;CX\_4\_32\_35\_34\_37\_3\_39\_38\_41\_40\_43\_16\_44\_45\_48\_47\_52\_51\_54\_19\_56\_55\_59\_26\_62\_28\_63\_29;TICK;CX\_32\_33\_34\_1\_35\_7\_37\_10\_38\_11\_39\_5\_40\_41\_43\_44\_45\_46\_47\_20\_48\_21\_51\_52\_54\_53\_55\_56\_59\_58\_62\_63;TICK;CX\_33\_5\_34\_6\_35\_2\_38\_4\_40\_13\_41\_7\_43\_9\_44\_10\_45\_11\_46\_12\_48\_14\_52\_22\_53\_18\_54\_24\_55\_20\_56\_26\_58\_25\_59\_60\_62\_61;TICK;CX\_32\_4\_33\_0\_35\_36\_39\_12\_40\_6\_41\_14\_42\_15\_44\_43\_45\_18\_46\_19\_47\_13\_51\_16\_52\_17\_53\_23\_55\_25\_56\_21\_58\_28\_59\_29\_61\_27;TICK;CX\_31\_3\_32\_33\_35\_34\_39\_38\_40\_41\_42\_8\_44\_17\_45\_46\_48\_47\_51\_52\_54\_53\_55\_56\_57\_27\_59\_58;TICK;CX\_31\_32\_33\_34\_38\_37\_39\_40\_43\_42\_44\_45\_46\_47\_48\_49\_50\_51\_52\_53\_54\_55\_58\_57\_62\_61;TICK;MX\_31;M\_32;MX\_33;M\_34;MX\_35;M\_36\_37;MX\_38\_39;M\_40\_41\_42;MX\_43\_44;M\_45;MX\_46;M\_47;MX\_48;M\_49;MX\_50;M\_51;MX\_52;M\_53;MX\_54;M\_55\_56\_57;MX\_58\_59;M\_60\_61;MX\_62;M\_63;MARKX(0)\_39;MARKZ(1)\_45}{\textcolor{NicePurple}{Open the triangular superdense 4.8.8 color code circuit in Crumble.}}}

\newcommand{\TorusCrumbleLink}{\href{https://algassert.com/crumble\#circuit=Q(4.35,\_8.75)\_0;Q(4.35,\_0.5)\_1;Q(4.35,\_1.05)\_2;Q(4.35,\_1.6)\_3;Q(1.05,\_4.35)\_4;Q(1.05,\_4.9)\_5;Q(1.05,\_5.45)\_6;Q(5.45,\_1.6)\_7;Q(4.35,\_2.15)\_8;Q(4.35,\_2.7)\_9;Q(4.35,\_3.25)\_10;Q(4.35,\_3.8)\_11;Q(5.45,\_2.15)\_12;Q(5.45,\_2.7)\_13;Q(5.45,\_3.25)\_14;Q(5.45,\_3.8)\_15;Q(6.55,\_6.55)\_16;Q(2.15,\_2.7)\_17;Q(2.15,\_3.25)\_18;Q(2.15,\_3.8)\_19;Q(3.25,\_2.15)\_20;Q(3.25,\_2.7)\_21;Q(3.25,\_3.25)\_22;Q(3.25,\_3.8)\_23;Q(2.15,\_4.35)\_24;Q(2.15,\_4.9)\_25;Q(2.15,\_5.45)\_26;Q(2.15,\_6)\_27;Q(3.25,\_4.35)\_28;Q(3.25,\_4.9)\_29;Q(3.25,\_5.45)\_30;Q(3.25,\_6)\_31;Q(2.15,\_6.55)\_32;Q(6.55,\_2.7)\_33;Q(6.55,\_3.25)\_34;Q(6.55,\_3.8)\_35;Q(3.25,\_6.55)\_36;Q(3.25,\_7.1)\_37;Q(3.25,\_7.65)\_38;Q(7.65,\_3.8)\_39;Q(6.55,\_4.35)\_40;Q(6.55,\_4.9)\_41;Q(6.55,\_5.45)\_42;Q(6.55,\_6)\_43;Q(7.65,\_4.35)\_44;Q(7.65,\_4.9)\_45;Q(7.65,\_5.45)\_46;Q(3.25,\_1.6)\_47;Q(4.35,\_4.35)\_48;Q(4.35,\_4.9)\_49;Q(4.35,\_5.45)\_50;Q(4.35,\_6)\_51;Q(5.45,\_4.35)\_52;Q(5.45,\_4.9)\_53;Q(5.45,\_5.45)\_54;Q(5.45,\_6)\_55;Q(4.35,\_6.55)\_56;Q(4.35,\_7.1)\_57;Q(4.35,\_7.65)\_58;Q(4.35,\_8.2)\_59;Q(5.45,\_6.55)\_60;Q(5.45,\_7.1)\_61;Q(5.45,\_7.65)\_62;Q(1.05,\_3.8)\_63;Q(3.8,\_7.65)\_64;Q(3.8,\_8.2)\_65;Q(6,\_6.55)\_66;Q(6,\_7.1)\_67;Q(2.7,\_6.55)\_68;Q(2.7,\_7.1)\_69;Q(4.9,\_5.45)\_70;Q(4.9,\_6)\_71;Q(3.8,\_5.45)\_72;Q(3.8,\_6)\_73;Q(3.8,\_6.55)\_74;Q(3.8,\_7.1)\_75;Q(4.9,\_6.55)\_76;Q(4.9,\_7.1)\_77;Q(4.9,\_7.65)\_78;Q(4.9,\_8.2)\_79;Q(6,\_5.45)\_80;Q(6,\_6)\_81;Q(8.2,\_4.35)\_82;Q(8.2,\_4.9)\_83;Q(4.9,\_4.35)\_84;Q(4.9,\_4.9)\_85;Q(7.1,\_3.25)\_86;Q(7.1,\_3.8)\_87;Q(6,\_3.25)\_88;Q(6,\_3.8)\_89;Q(6,\_4.35)\_90;Q(6,\_4.9)\_91;Q(7.1,\_4.35)\_92;Q(7.1,\_4.9)\_93;Q(7.1,\_5.45)\_94;Q(7.1,\_6)\_95;Q(1.6,\_5.45)\_96;Q(1.6,\_6)\_97;Q(3.8,\_4.35)\_98;Q(3.8,\_4.9)\_99;Q(0.5,\_4.35)\_100;Q(0.5,\_4.9)\_101;Q(2.7,\_3.25)\_102;Q(2.7,\_3.8)\_103;Q(1.6,\_3.25)\_104;Q(1.6,\_3.8)\_105;Q(1.6,\_4.35)\_106;Q(1.6,\_4.9)\_107;Q(2.7,\_4.35)\_108;Q(2.7,\_4.9)\_109;Q(2.7,\_5.45)\_110;Q(2.7,\_6)\_111;Q(3.8,\_3.25)\_112;Q(3.8,\_3.8)\_113;Q(6,\_2.15)\_114;Q(6,\_2.7)\_115;Q(2.7,\_2.15)\_116;Q(2.7,\_2.7)\_117;Q(4.9,\_1.05)\_118;Q(4.9,\_1.6)\_119;Q(3.8,\_1.05)\_120;Q(3.8,\_1.6)\_121;Q(3.8,\_2.15)\_122;Q(3.8,\_2.7)\_123;Q(4.9,\_2.15)\_124;Q(4.9,\_2.7)\_125;Q(4.9,\_3.25)\_126;Q(4.9,\_3.8)\_127;POLYGON(1,0,0,0.5)\_5\_6\_26\_25;POLYGON(1,0,0,0.5)\_3\_8\_12\_7;POLYGON(0,1,0,0.5)\_8\_9\_10\_11\_15\_14\_13\_12;POLYGON(0,0,1,0.5)\_22\_23\_28\_29\_49\_48\_11\_10;POLYGON(1,0,0,0.5)\_21\_22\_10\_9;POLYGON(1,0,0,0.5)\_19\_24\_28\_23;POLYGON(0,1,0,0.5)\_24\_25\_26\_27\_31\_30\_29\_28;POLYGON(1,0,0,0.5)\_37\_38\_58\_57;POLYGON(1,0,0,0.5)\_35\_40\_44\_39;POLYGON(0,0,1,0.5)\_14\_15\_52\_53\_41\_40\_35\_34;POLYGON(1,0,0,0.5)\_13\_14\_34\_33;POLYGON(1,0,0,0.5)\_27\_32\_36\_31;POLYGON(0,1,0,0.5)\_48\_49\_50\_51\_55\_54\_53\_52;POLYGON(1,0,0,0.5)\_53\_54\_42\_41;POLYGON(1,0,0,0.5)\_51\_56\_60\_55;POLYGON(0,0,1,0.5)\_30\_31\_36\_37\_57\_56\_51\_50;POLYGON(1,0,0,0.5)\_29\_30\_50\_49;POLYGON(1,0,0,0.5)\_11\_48\_52\_15;TICK;R\_64;RX\_65;R\_66;RX\_67;R\_68;RX\_69;R\_70;RX\_71;R\_72;RX\_73;R\_74;RX\_75;R\_76\_77;RX\_78\_79;R\_80;RX\_81;R\_82;RX\_83;R\_84;RX\_85;R\_86;RX\_87;R\_88;RX\_89;R\_90;RX\_91;R\_92\_93;RX\_94\_95;R\_96;RX\_97;R\_98;RX\_99;R\_100;RX\_101;R\_102;RX\_103;R\_104;RX\_105;R\_106;RX\_107;R\_108\_109;RX\_110\_111;R\_112;RX\_113;R\_114;RX\_115;R\_116;RX\_117;R\_118;RX\_119;R\_120;RX\_121;R\_122;RX\_123;R\_124\_125;RX\_126\_127;MARKX(0)\_85\_91;MARKZ(1)\_74\_76;TICK;CX\_75\_64\_81\_66\_69\_86\_71\_76\_99\_72\_73\_74\_78\_77\_79\_100\_91\_80\_65\_82\_85\_70\_87\_92\_115\_88\_89\_90\_94\_93\_95\_116\_107\_96\_113\_98\_101\_118\_103\_108\_67\_104\_105\_106\_110\_109\_111\_68\_123\_112\_97\_114\_117\_102\_119\_124\_83\_120\_121\_122\_126\_125\_127\_84;TICK;CX\_64\_65\_67\_66\_69\_68\_70\_71\_72\_73\_75\_74\_77\_76\_78\_79\_80\_81\_83\_82\_85\_84\_86\_87\_88\_89\_91\_90\_93\_92\_94\_95\_96\_97\_99\_98\_101\_100\_102\_103\_104\_105\_107\_106\_109\_108\_110\_111\_112\_113\_115\_114\_117\_116\_118\_119\_120\_121\_123\_122\_125\_124\_126\_127;TICK;CX\_59\_65\_16\_66\_68\_36\_71\_55\_51\_73\_56\_74\_76\_60\_79\_63\_43\_81\_0\_82\_84\_52\_87\_39\_35\_89\_40\_90\_92\_44\_95\_47\_27\_97\_48\_98\_100\_4\_103\_23\_19\_105\_24\_106\_108\_28\_111\_31\_11\_113\_32\_114\_116\_20\_119\_7\_3\_121\_8\_122\_124\_12\_127\_15\_58\_64\_17\_67\_69\_37\_70\_54\_50\_72\_57\_75\_77\_61\_78\_62\_42\_80\_1\_83\_85\_53\_86\_38\_34\_88\_41\_91\_93\_45\_94\_46\_26\_96\_49\_99\_101\_5\_102\_22\_18\_104\_25\_107\_109\_29\_110\_30\_10\_112\_33\_115\_117\_21\_118\_6\_2\_120\_9\_123\_125\_13\_126\_14;TICK;CX\_38\_64\_61\_67\_69\_33\_70\_50\_30\_72\_37\_75\_77\_57\_78\_58\_54\_80\_45\_83\_85\_49\_86\_34\_14\_88\_53\_91\_93\_41\_94\_42\_6\_96\_29\_99\_101\_1\_102\_18\_62\_104\_5\_107\_109\_25\_110\_26\_22\_112\_13\_115\_117\_17\_118\_2\_46\_120\_21\_123\_125\_9\_126\_10\_39\_65\_60\_66\_68\_32\_71\_51\_31\_73\_36\_74\_76\_56\_79\_59\_55\_81\_44\_82\_84\_48\_87\_35\_15\_89\_52\_90\_92\_40\_95\_43\_7\_97\_28\_98\_100\_0\_103\_19\_63\_105\_4\_106\_108\_24\_111\_27\_23\_113\_12\_114\_116\_16\_119\_3\_47\_121\_20\_122\_124\_8\_127\_11;TICK;CX\_66\_67\_80\_81\_68\_69\_86\_87\_74\_75\_72\_73\_79\_78\_77\_76\_82\_83\_64\_65\_84\_85\_70\_71\_90\_91\_88\_89\_95\_94\_93\_92\_98\_99\_112\_113\_100\_101\_118\_119\_106\_107\_104\_105\_111\_110\_109\_108\_114\_115\_96\_97\_116\_117\_102\_103\_122\_123\_120\_121\_127\_126\_125\_124;TICK;CX\_81\_66\_69\_86\_73\_74\_78\_77\_65\_82\_85\_70\_89\_90\_94\_93\_113\_98\_101\_118\_105\_106\_110\_109\_97\_114\_117\_102\_121\_122\_126\_125;TICK;CX\_66\_81\_86\_69\_74\_73\_77\_78\_82\_65\_70\_85\_90\_89\_93\_94\_98\_113\_118\_101\_106\_105\_109\_110\_114\_97\_102\_117\_122\_121\_125\_126;TICK;CX\_67\_66\_81\_80\_69\_68\_87\_86\_75\_74\_73\_72\_78\_79\_76\_77\_83\_82\_65\_64\_85\_84\_71\_70\_91\_90\_89\_88\_94\_95\_92\_93\_99\_98\_113\_112\_101\_100\_119\_118\_107\_106\_105\_104\_110\_111\_108\_109\_115\_114\_97\_96\_117\_116\_103\_102\_123\_122\_121\_120\_126\_127\_124\_125;TICK;CX\_83\_1\_120\_2\_5\_101\_6\_118\_123\_9\_112\_10\_13\_125\_14\_126\_67\_17\_104\_18\_21\_117\_22\_102\_107\_25\_96\_26\_29\_109\_30\_110\_115\_33\_88\_34\_37\_69\_38\_86\_91\_41\_80\_42\_45\_93\_46\_94\_99\_49\_72\_50\_53\_85\_54\_70\_75\_57\_64\_58\_61\_77\_62\_78\_82\_0\_121\_3\_4\_100\_7\_119\_122\_8\_113\_11\_12\_124\_15\_127\_66\_16\_105\_19\_20\_116\_23\_103\_106\_24\_97\_27\_28\_108\_31\_111\_114\_32\_89\_35\_36\_68\_39\_87\_90\_40\_81\_43\_44\_92\_47\_95\_98\_48\_73\_51\_52\_84\_55\_71\_74\_56\_65\_59\_60\_76\_63\_79;TICK;CX\_0\_100\_3\_119\_106\_4\_97\_7\_8\_124\_11\_127\_114\_12\_89\_15\_16\_116\_19\_103\_122\_20\_113\_23\_24\_108\_27\_111\_98\_28\_73\_31\_32\_68\_35\_87\_74\_36\_65\_39\_40\_92\_43\_95\_82\_44\_121\_47\_48\_84\_51\_71\_90\_52\_81\_55\_56\_76\_59\_79\_66\_60\_105\_63\_1\_101\_2\_118\_107\_5\_96\_6\_9\_125\_10\_126\_115\_13\_88\_14\_17\_117\_18\_102\_123\_21\_112\_22\_25\_109\_26\_110\_99\_29\_72\_30\_33\_69\_34\_86\_75\_37\_64\_38\_41\_93\_42\_94\_83\_45\_120\_46\_49\_85\_50\_70\_91\_53\_80\_54\_57\_77\_58\_78\_67\_61\_104\_62;TICK;CX\_65\_64\_66\_67\_68\_69\_71\_70\_73\_72\_74\_75\_76\_77\_79\_78\_81\_80\_82\_83\_84\_85\_87\_86\_89\_88\_90\_91\_92\_93\_95\_94\_97\_96\_98\_99\_100\_101\_103\_102\_105\_104\_106\_107\_108\_109\_111\_110\_113\_112\_114\_115\_116\_117\_119\_118\_121\_120\_122\_123\_124\_125\_127\_126;TICK;CX\_75\_64\_81\_66\_69\_86\_71\_76\_99\_72\_73\_74\_78\_77\_79\_100\_91\_80\_65\_82\_85\_70\_87\_92\_115\_88\_89\_90\_94\_93\_95\_116\_107\_96\_113\_98\_101\_118\_103\_108\_67\_104\_105\_106\_110\_109\_111\_68\_123\_112\_97\_114\_117\_102\_119\_124\_83\_120\_121\_122\_126\_125\_127\_84;TICK;M\_64;MX\_65;M\_66;MX\_67;M\_68;MX\_69;M\_70;MX\_71;M\_72;MX\_73;M\_74;MX\_75;M\_76\_77;MX\_78\_79;M\_80;MX\_81;M\_82;MX\_83;M\_84;MX\_85;M\_86;MX\_87;M\_88;MX\_89;M\_90;MX\_91;M\_92\_93;MX\_94\_95;M\_96;MX\_97;M\_98;MX\_99;M\_100;MX\_101;M\_102;MX\_103;M\_104;MX\_105;M\_106;MX\_107;M\_108\_109;MX\_110\_111;M\_112;MX\_113;M\_114;MX\_115;M\_116;MX\_117;M\_118;MX\_119;M\_120;MX\_121;M\_122;MX\_123;M\_124\_125;MX\_126\_127;MARKX(0)\_85;MARKZ(1)\_74;TICK;R\_64;RX\_65;R\_66;RX\_67;R\_68;RX\_69;R\_70;RX\_71;R\_72;RX\_73;R\_74;RX\_75;R\_76\_77;RX\_78\_79;R\_80;RX\_81;R\_82;RX\_83;R\_84;RX\_85;R\_86;RX\_87;R\_88;RX\_89;R\_90;RX\_91;R\_92\_93;RX\_94\_95;R\_96;RX\_97;R\_98;RX\_99;R\_100;RX\_101;R\_102;RX\_103;R\_104;RX\_105;R\_106;RX\_107;R\_108\_109;RX\_110\_111;R\_112;RX\_113;R\_114;RX\_115;R\_116;RX\_117;R\_118;RX\_119;R\_120;RX\_121;R\_122;RX\_123;R\_124\_125;RX\_126\_127;MARKX(0)\_85\_99;MARKZ(1)\_68\_74;TICK;CX\_75\_64\_81\_66\_69\_86\_71\_76\_99\_72\_73\_74\_78\_77\_79\_100\_91\_80\_65\_82\_85\_70\_87\_92\_115\_88\_89\_90\_94\_93\_95\_116\_107\_96\_113\_98\_101\_118\_103\_108\_67\_104\_105\_106\_110\_109\_111\_68\_123\_112\_97\_114\_117\_102\_119\_124\_83\_120\_121\_122\_126\_125\_127\_84;TICK;CX\_64\_65\_67\_66\_69\_68\_70\_71\_72\_73\_75\_74\_77\_76\_78\_79\_80\_81\_83\_82\_85\_84\_86\_87\_88\_89\_91\_90\_93\_92\_94\_95\_96\_97\_99\_98\_101\_100\_102\_103\_104\_105\_107\_106\_109\_108\_110\_111\_112\_113\_115\_114\_117\_116\_118\_119\_120\_121\_123\_122\_125\_124\_126\_127;TICK;CX\_59\_65\_16\_66\_68\_36\_71\_55\_51\_73\_56\_74\_76\_60\_79\_63\_43\_81\_0\_82\_84\_52\_87\_39\_35\_89\_40\_90\_92\_44\_95\_47\_27\_97\_48\_98\_100\_4\_103\_23\_19\_105\_24\_106\_108\_28\_111\_31\_11\_113\_32\_114\_116\_20\_119\_7\_3\_121\_8\_122\_124\_12\_127\_15\_58\_64\_17\_67\_69\_37\_70\_54\_50\_72\_57\_75\_77\_61\_78\_62\_42\_80\_1\_83\_85\_53\_86\_38\_34\_88\_41\_91\_93\_45\_94\_46\_26\_96\_49\_99\_101\_5\_102\_22\_18\_104\_25\_107\_109\_29\_110\_30\_10\_112\_33\_115\_117\_21\_118\_6\_2\_120\_9\_123\_125\_13\_126\_14;TICK;CX\_38\_64\_61\_67\_69\_33\_70\_50\_30\_72\_37\_75\_77\_57\_78\_58\_54\_80\_45\_83\_85\_49\_86\_34\_14\_88\_53\_91\_93\_41\_94\_42\_6\_96\_29\_99\_101\_1\_102\_18\_62\_104\_5\_107\_109\_25\_110\_26\_22\_112\_13\_115\_117\_17\_118\_2\_46\_120\_21\_123\_125\_9\_126\_10\_39\_65\_60\_66\_68\_32\_71\_51\_31\_73\_36\_74\_76\_56\_79\_59\_55\_81\_44\_82\_84\_48\_87\_35\_15\_89\_52\_90\_92\_40\_95\_43\_7\_97\_28\_98\_100\_0\_103\_19\_63\_105\_4\_106\_108\_24\_111\_27\_23\_113\_12\_114\_116\_16\_119\_3\_47\_121\_20\_122\_124\_8\_127\_11;TICK;CX\_66\_67\_80\_81\_68\_69\_86\_87\_74\_75\_72\_73\_79\_78\_77\_76\_82\_83\_64\_65\_84\_85\_70\_71\_90\_91\_88\_89\_95\_94\_93\_92\_98\_99\_112\_113\_100\_101\_118\_119\_106\_107\_104\_105\_111\_110\_109\_108\_114\_115\_96\_97\_116\_117\_102\_103\_122\_123\_120\_121\_127\_126\_125\_124;TICK;CX\_81\_66\_69\_86\_73\_74\_78\_77\_65\_82\_85\_70\_89\_90\_94\_93\_113\_98\_101\_118\_105\_106\_110\_109\_97\_114\_117\_102\_121\_122\_126\_125;TICK;CX\_66\_81\_86\_69\_74\_73\_77\_78\_82\_65\_70\_85\_90\_89\_93\_94\_98\_113\_118\_101\_106\_105\_109\_110\_114\_97\_102\_117\_122\_121\_125\_126;TICK;CX\_67\_66\_81\_80\_69\_68\_87\_86\_75\_74\_73\_72\_78\_79\_76\_77\_83\_82\_65\_64\_85\_84\_71\_70\_91\_90\_89\_88\_94\_95\_92\_93\_99\_98\_113\_112\_101\_100\_119\_118\_107\_106\_105\_104\_110\_111\_108\_109\_115\_114\_97\_96\_117\_116\_103\_102\_123\_122\_121\_120\_126\_127\_124\_125;TICK;CX\_83\_1\_120\_2\_5\_101\_6\_118\_123\_9\_112\_10\_13\_125\_14\_126\_67\_17\_104\_18\_21\_117\_22\_102\_107\_25\_96\_26\_29\_109\_30\_110\_115\_33\_88\_34\_37\_69\_38\_86\_91\_41\_80\_42\_45\_93\_46\_94\_99\_49\_72\_50\_53\_85\_54\_70\_75\_57\_64\_58\_61\_77\_62\_78\_82\_0\_121\_3\_4\_100\_7\_119\_122\_8\_113\_11\_12\_124\_15\_127\_66\_16\_105\_19\_20\_116\_23\_103\_106\_24\_97\_27\_28\_108\_31\_111\_114\_32\_89\_35\_36\_68\_39\_87\_90\_40\_81\_43\_44\_92\_47\_95\_98\_48\_73\_51\_52\_84\_55\_71\_74\_56\_65\_59\_60\_76\_63\_79;TICK;CX\_0\_100\_3\_119\_106\_4\_97\_7\_8\_124\_11\_127\_114\_12\_89\_15\_16\_116\_19\_103\_122\_20\_113\_23\_24\_108\_27\_111\_98\_28\_73\_31\_32\_68\_35\_87\_74\_36\_65\_39\_40\_92\_43\_95\_82\_44\_121\_47\_48\_84\_51\_71\_90\_52\_81\_55\_56\_76\_59\_79\_66\_60\_105\_63\_1\_101\_2\_118\_107\_5\_96\_6\_9\_125\_10\_126\_115\_13\_88\_14\_17\_117\_18\_102\_123\_21\_112\_22\_25\_109\_26\_110\_99\_29\_72\_30\_33\_69\_34\_86\_75\_37\_64\_38\_41\_93\_42\_94\_83\_45\_120\_46\_49\_85\_50\_70\_91\_53\_80\_54\_57\_77\_58\_78\_67\_61\_104\_62;TICK;CX\_65\_64\_66\_67\_68\_69\_71\_70\_73\_72\_74\_75\_76\_77\_79\_78\_81\_80\_82\_83\_84\_85\_87\_86\_89\_88\_90\_91\_92\_93\_95\_94\_97\_96\_98\_99\_100\_101\_103\_102\_105\_104\_106\_107\_108\_109\_111\_110\_113\_112\_114\_115\_116\_117\_119\_118\_121\_120\_122\_123\_124\_125\_127\_126;TICK;CX\_75\_64\_81\_66\_69\_86\_71\_76\_99\_72\_73\_74\_78\_77\_79\_100\_91\_80\_65\_82\_85\_70\_87\_92\_115\_88\_89\_90\_94\_93\_95\_116\_107\_96\_113\_98\_101\_118\_103\_108\_67\_104\_105\_106\_110\_109\_111\_68\_123\_112\_97\_114\_117\_102\_119\_124\_83\_120\_121\_122\_126\_125\_127\_84;TICK;M\_64;MX\_65;M\_66;MX\_67;M\_68;MX\_69;M\_70;MX\_71;M\_72;MX\_73;M\_74;MX\_75;M\_76\_77;MX\_78\_79;M\_80;MX\_81;M\_82;MX\_83;M\_84;MX\_85;M\_86;MX\_87;M\_88;MX\_89;M\_90;MX\_91;M\_92\_93;MX\_94\_95;M\_96;MX\_97;M\_98;MX\_99;M\_100;MX\_101;M\_102;MX\_103;M\_104;MX\_105;M\_106;MX\_107;M\_108\_109;MX\_110\_111;M\_112;MX\_113;M\_114;MX\_115;M\_116;MX\_117;M\_118;MX\_119;M\_120;MX\_121;M\_122;MX\_123;M\_124\_125;MX\_126\_127;MARKX(0)\_85;MARKZ(1)\_74}{\textcolor{NicePurple}{Open a $d=8$ torus superdense 4.8.8 color code circuit in Crumble.}}}

\usepackage[font=small,labelfont=bf]{caption}
\title{Denser Planar Color Codes}
\author{Noah Shutty\\[10pt]Google Quantum AI, Venice, CA, 90291}
\date{September 17, 2026}
\begin{document}
% Balance whitespace above the title and below Figure 1's caption.
\newgeometry{top=0.75in,bottom=0.7in,left=0.7in,right=0.7in}
\begingroup
\setlength{\parskip}{0pt}
\maketitle
\endgroup
\begin{abstract}
We consider the 4.8.8 color code in a brickwork layout on the square lattice
with nearest-neighbor gates.
We identify periodic superdense circuits with twelve CNOT layers that appear to preserve the full distance of these codes.
Including ancillas, the family uses
$q=(2d^2+5d-5)/2$ physical qubits for distance $d$: asymptotically $2/3$ as many
as the triangular 6.6.6 color code and $1/2$ as many as the rotated surface code
at the same code distance. Circuit-level simulations targeting
superconducting architectures demonstrate
favorable scaling of 4.8.8 color codes at plausible physical noise rates,
including both for planar codes and their toroidal cousins under periodic
boundary conditions.
\end{abstract}

{\small\textbf{Data availability:} The source, circuit, and data bundle is available at
\href{https://doi.org/10.5281/zenodo.22820614}{Zenodo (10.5281/zenodo.22820614)}.\par}

% Preserve Figure 1's position after shortening the abstract.
\vspace{22pt}

\begin{figure}[H]
\centering
\includegraphics[width=0.85\linewidth]{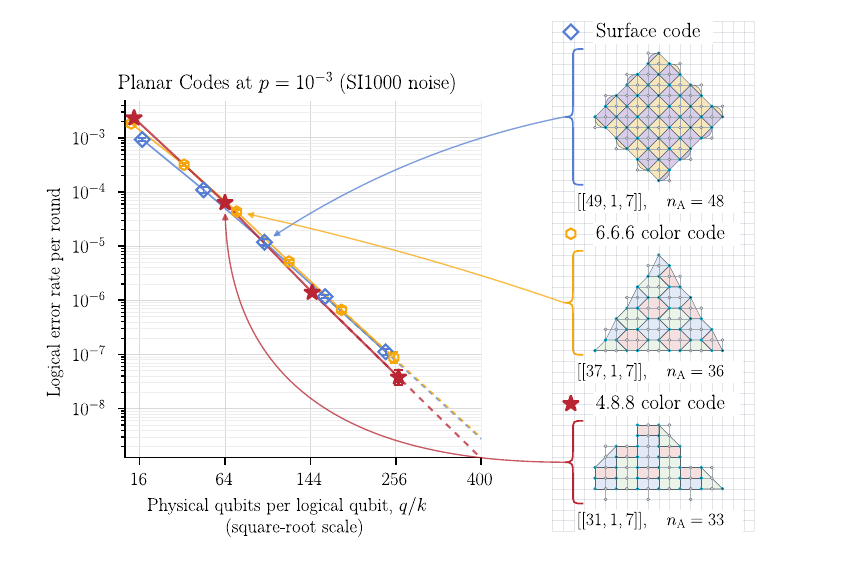}
\caption{Denser planar quantum memories. Left: pooled $X/Z$ memory
performance at $p=10^{-3}$ under SI1000 noise with native CZ gates.
Error bars represent 95\% confidence intervals.
Dashed lines extrapolate $\log\ell$ linearly in $\sqrt{q/k}$ through the two
largest sizes with observed errors. Decoding uses
Tesseract~\cite{tesseract} with parameters described
in Appendix~\ref{app:simulation}.}
\label{fig:feature}
\end{figure}

\restoregeometry
\section{Introduction}
Topological quantum error-correcting codes offer a route from noisy local
operations to robust quantum information storage. Beginning with the toric
code~\cite{toric}, they have played a central role in the history of fault-tolerant
quantum computing. Surface code architectures supplied early blueprints
for universal fault-tolerant computation with local operations, combining magic-state distillation with
defect braiding~\cite{topological_universal} and later lattice surgery~\cite{lattice_surgery}.
Even in the modern push towards higher-rate quantum low-density parity-check
(qLDPC) codes, topological codes have found use as building blocks for
modular architectures based on hyperbolic codes~\cite{higgott2026}.
Color codes~\cite{color} support transversal Clifford gates and efficient
magic-state preparation, including cultivation~\cite{cultivation}.
On the experimental side, color code memories and logical operations have been
explored on superconducting quantum processors~\cite{lacroix,rosenfeld}.
The local checks and high thresholds of topological codes remain attractive
even with reconfigurable hardware, with surface code experiments in neutral-atom
platforms demonstrating below-threshold error suppression~\cite{neutralatoms}.

For stabilizer codes with local checks in a two-dimensional Euclidean lattice,
the Bravyi--Poulin--Terhal (BPT) bound requires $n / (kd^2)=\Omega(1)$ for $n$ data qubits
encoding $k$ logical qubits at distance $d$~\cite{bpt}.
However, an interesting open question remains the constant-factor overhead $n / (kd^2)$ that can be achieved in practice.
It was observed in \cite{bmd07} that planar triangular 6.6.6 color codes offer
lower overhead than rotated surface codes, with $n/(kd^2)\to 3/4$ rather than $1$.
It was subsequently observed in \cite{landahl} that triangular 4.8.8
color codes offer lower overhead than triangular 6.6.6 color codes,
with $n/(kd^2)\to 1/2$ rather than $3/4$.
A major question is whether these savings in data qubits translate into lower
total qubit overhead at a target logical error rate once ancilla qubits and
circuit-level noise are included.
To provide a fair comparison between surface and color codes, the simulations
in~\cite{lacroix}
used a mixed-integer-program decoder to find the most likely error configuration
consistent with the observed detection events. These simulations showed that
6.6.6 color code memories can improve
on surface code overhead at realistic noise rates.

Here, we revisit the 4.8.8 color code and optimize its performance as a quantum memory.
Although these 4.8.8 color codes are often arranged in a square-octagon
layout~\cite{landahl,flagged_weights},
we use a brickwork embedding with alternating rails of data and ancilla qubits.
This layout was inspired by the brickwork ancilla-free code state shown
in~\cite{superdense} in the context of a middle-out circuit construction~\cite{midout}.
We also use Tesseract~\cite{tesseract}, which achieves accuracy comparable to
integer-program decoding at substantially lower runtime on benchmark circuits.
Finally, following the design principles of superdense syndrome
extraction~\cite{superdense}\footnote{Ref.~\cite{superdense} incorrectly gives
a distance of $d/2$ for its superdense 6.6.6 color code circuits.
These circuits preserve distance $d$, as clarified in~\cite{superdense_correction}.},
we construct a periodic circuit family that
preserves distance in all tested instances. It uses twelve CNOT layers,
compared with eight for the triangular 6.6.6 color code, while extracting all
$X$ and $Z$ stabilizers within a single reset/measure cycle.
Together, these improvements make the 4.8.8 color code a competitive quantum
memory on square-lattice hardware.

\section{Methods}
\label{sec:methods}
\subsection{Code construction}
\label{sec:code-construction}
Two-dimensional color codes place qubits on a three-colorable tiling, with
paired $X$ and $Z$ face checks~\cite{color}. Just as rotated and unrotated
surface codes use the same stabilizer tiling but different cuts, with the
rotated code more compact at fixed distance~\cite{lattice_surgery}, different
planar boundaries or periodic identifications change the overhead of color codes.
For example, Figure~\ref{fig:methods} shows the best $[[18,4,4]]$ and
alternate $[[24,4,4]]$ 6.6.6 torus quotients: the same $k$ and $d$ with
different numbers of data qubits.
For torus codes, we enumerate finite-index subgroups of the tiling's admissible
translations. Their fundamental regions are \emph{admissible parallelograms}:
identifying opposite edges matches qubits and checks consistently.

For planar codes, we select triangular color code patches or square surface
code patches, retaining interior checks and selected restrictions of
boundary-crossing checks to the remaining data qubits.
Each retained color-code face supports both an $X$ and a $Z$ check.
We select boundary fragments according to the boundary
type: a different face color is omitted on each side of a color code
triangle~\cite{boundaries}, while surface code squares have $X$-type boundary
checks on two opposite sides and $Z$-type checks on the other two. We verify
that the resulting stabilizers commute and encode one logical qubit.
Figure~\ref{fig:methods} illustrates these constructions.

\clearpage
% Preserve the page's RGB compositing group, while Figure 2 itself stays
% ungrouped for Apple PDF renderers (see FIGURE2_VECTOR_COMPATIBILITY.md).
\AddToHookNext{shipout/before}{\pdfpageattr{/Group << /Type /Group /S /Transparency /I true /CS /DeviceRGB >>}}
\AddToHookNext{shipout/after}{\pdfpageattr{}}
\begin{figure}[H]
\centering
\includegraphics[width=0.991\linewidth]{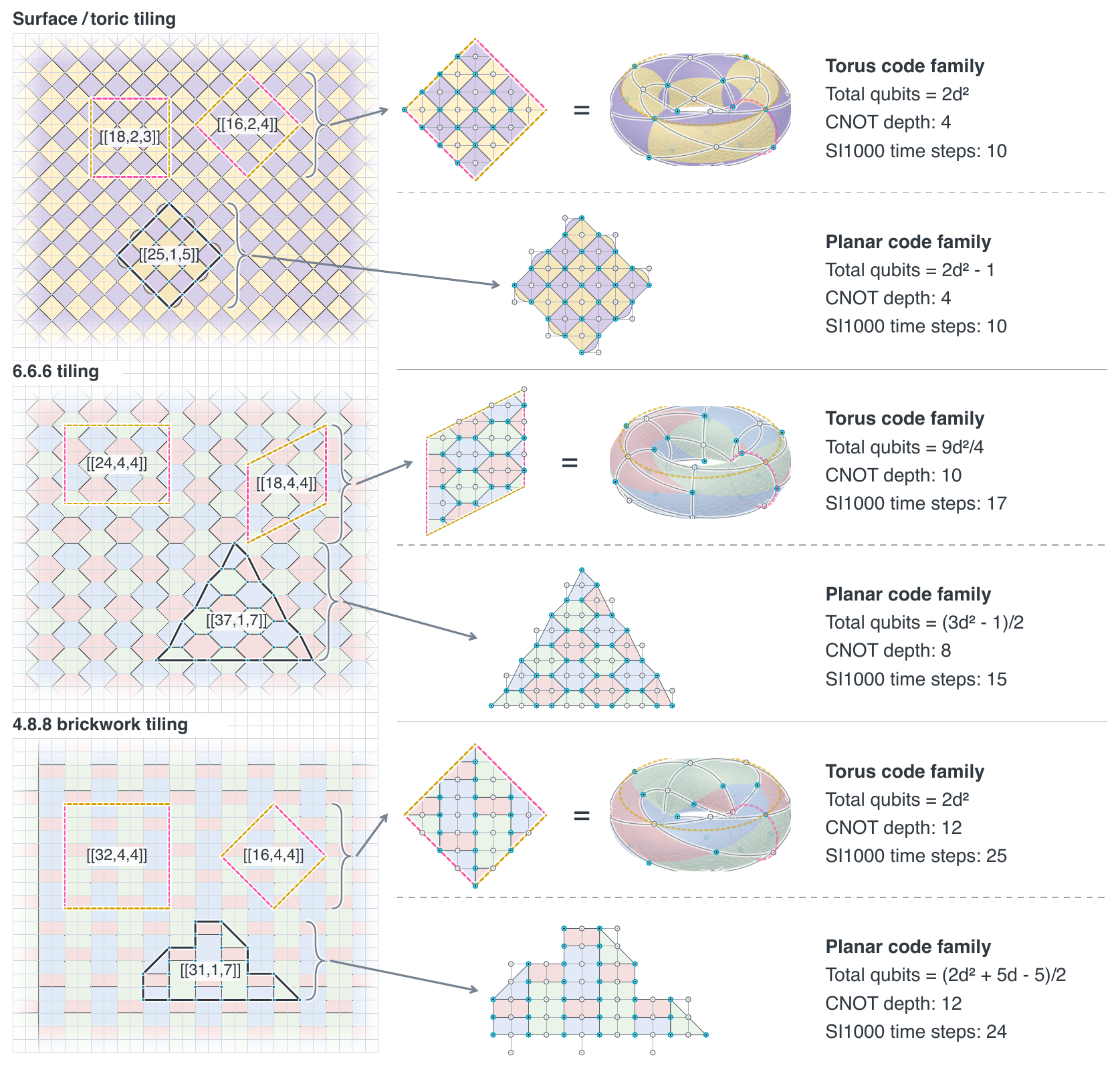}
% Detailed visual description retained for reference and future accessible editions.
% This source comment is not embedded screen-reader text in the compiled PDF.
% From stabilizer tilings to finite torus and planar codes.
% The fabrics share a square hardware grid: one edge is one nearest-neighbor coupler.
% Surface code coordinates are rotated and uniformly rescaled to match this physical
% pitch, not stabilizer-face width. The faint mesh continues between the rows,
% progressing from surface/toric through 6.6.6 to 4.8.8. Braces select the
% right-hand quotient and planar cutout; matching pink or gold edges are identified.
% Each equals sign connects flat and wrapped views of the same quotient.
% The planar surface code example has $d=5$; both color code examples have $d=7$.
% Solid contours follow retained check supports, including boundary truncations.
% Flat hardware views use dark check boundaries over light-gray circuit couplers.
% Blue-rimmed dots are data and gray-rimmed dots are ancillas.
% Tiling cutout labels give $[[n,k,d]]$. Resources give total SI1000 time
% steps per cycle, CNOT depth, and total data-plus-ancilla
% qubits as a function of d.
% Torus formulas apply to scaled
% multiples of these $d=4$ examples; planar 4.8.8 uses $d=4s-1$.
% All flat views preserve equal aspect ratio.
\caption{From stabilizer tilings to finite torus and planar codes.}
\label{fig:methods}
\end{figure}

\subsection{Superdense syndrome extraction}
The stabilizer checks of the 4.8.8 color codes are paired face operators
$S_f^X=\prod_{q\in f}X_q$ and $S_f^Z=\prod_{q\in f}Z_q$.
Bell-flagged extraction uses a Bell pair of ancillas to measure a face check
while retaining a flag outcome~\cite{bell_flagged}.
Superdense extraction acquires both check types in each complete cycle
by using the second ancilla outcome for the check in the other basis
instead of a flag~\cite{superdense}.
Unlike the 6.6.6 construction, which assigns an ancilla pair to each face,
our 4.8.8 circuits share ancillas between four- and eight-sided checks.
For simplicity, we describe the circuits in terms of CNOT gates.
Before simulation, we transpile these circuits to the native-CZ gate
set of the SI1000 noise model~\cite{si1000}.
Section~\ref{sec:native-implementation} details the implementation used in
Figs.~\ref{fig:feature} and~\ref{fig:comparison}.

\subsubsection{Measurement records and Pauli frames}
\label{sec:classical-construction}
Write the ancilla measurement record of one cycle as a binary column vector
$\mathbf{r}$, with outcome $r_j$ corresponding to eigenvalue $(-1)^{r_j}$.
All classical processing below consists of fixed XOR rules, written as
binary matrix products.

A \textbf{superdense frame update} is a tuple
$\mathcal{F}=(F^X,F^Z,\Lambda^X,\Lambda^Z)$ of binary matrices giving
the physical Pauli-frame bits and logical-sign updates:
\[
 \mathbf{x}=F^X\mathbf{r},\qquad \mathbf{z}=F^Z\mathbf{r},\qquad
 \boldsymbol{\lambda}^P=\Lambda^P\mathbf{r},\quad P\in\{X,Z\},
\]
where all matrix multiplication is modulo two.
Up to a global phase, the corresponding physical correction is
$\mathcal{P}=\prod_q X_q^{x_q}Z_q^{z_q}$.
The bit $\lambda_j^P$ records the sign flip of logical operator $L_j^P$
under this same correction.
The physical frame restores incoming face and logical signs and is
distinct from the decoder's correction for noise. We track it through
detector and logical-observable parities instead of applying feedback
gates~\cite{superdense,stim}.

Face-syndrome bits use the same notation:
$\mathbf{s}^{P,\mathrm{in}}=A^{P,\mathrm{in}}\mathbf{r}$ and
$\mathbf{s}^{P,\mathrm{out}}=A^{P,\mathrm{out}}\mathbf{r}$ give the signs
$(-1)^{s_f^{P,\mathrm{in/out}}}$ of $S_f^P$ before and after the ideal
cycle. A further matrix $C$ gives the ideal record constraints $C\mathbf{r}=\mathbf{0}$.

Tables~\ref{tab:torus-syndromes} and~\ref{tab:triangular-syndromes} list
each syndrome-parity row by the set of record indices at which it is one.
For example, $J=\{2,5\}$ gives $r_2\oplus r_5$, while $J=\varnothing$ gives zero.

We derive the superdense frame update matrices by introducing
\emph{reference qubits} into the calculation. To identify a
measurement-induced Pauli byproduct on a data qubit $d$, we first pair $d$
with an untouched reference qubit in the Bell state
$|\Phi^+\rangle=(|00\rangle+|11\rangle)/\sqrt{2}$.
The compensating Pauli is the operator on $d$ that restores this Bell state
after measurement. The first two gates below prepare the pair:
\[
\begin{quantikz}[row sep=0.3cm,column sep=0.4cm]
 \lstick{$\text{reference qubit}:\ |0\rangle$} & \gate{H} & \ctrl{1} & & & \\
 \lstick{$d:\ |0\rangle$} & & \targ{} & \targ{} & & \\
 \lstick{$a:\ |+\rangle$} & & & \ctrl{-1} & \meter{Z} & \rstick{$r$}\setwiretype{c}
\end{quantikz}
\]
The remaining CNOT and measurement leave
$(I\otimes X^r)|\Phi^+\rangle$: $X\otimes X$ still has sign $+1$,
while $Z\otimes Z$ has sign $(-1)^r$.
The correction $X_d^r$ restores the Bell state.
Here $x_d=r$, $z_d=0$, $\lambda^X=0$, and $\lambda^Z=r$.
Without the Bell preparation, the ancilla CNOT and measurement send an
arbitrary data input $|\psi\rangle$ to $X^r|\psi\rangle$.
For the full extraction circuit, we propagate the corresponding Pauli
correlations through the ideal Clifford gates and replace measured ancilla
operators by their recorded signs, obtaining the face-syndrome parities
and logical-sign updates.

Let $H$ contain the binary face supports, and let row $j$ of $L^P$ be
the binary support of logical operator $L_j^P$. A valid frame update satisfies
\[
\begin{aligned}
 HF^Z&=A^{X,\mathrm{in}}\oplus A^{X,\mathrm{out}},
 &L^X F^Z&=\Lambda^X,\\
 HF^X&=A^{Z,\mathrm{in}}\oplus A^{Z,\mathrm{out}},
 &L^Z F^X&=\Lambda^Z.
\end{aligned}
\]
The equations hold modulo the row space of $C$, since adding a constraint
does not change a parity on valid ideal records. Gaussian elimination gives
a valid representative frame matrix $F=\begin{pmatrix}F^X\\F^Z\end{pmatrix}$.

Memory experiments prepare and finally measure the data in the same Pauli
basis. As in standard Stim~\cite{stim} memory experiments, we add detector
annotations for products of consecutive measurements of the same face
check, accounting for the Pauli frame, with boundary detectors comparing
the first and last checks to data preparation and readout.
Independent ideal record constraints are also included.
Logical-observable annotations combine the final logical measurement with
the accumulated $\boldsymbol{\lambda}^P$ updates.

\subsubsection{Torus code circuit}
The torus circuit uses a mixed-polarity pattern: different ancilla groups use
opposite CNOT directions and corresponding preparation and measurement bases.
The same cycle repeats every round.
At the end of a memory experiment, the data share the final ancilla measurement step.

Figure~\ref{fig:torus-cycle} shows the complete spatial
repeat on the $[[16,4,4]]$ torus, with 16 ancillas and 120 CNOTs per cycle.
Tiling an $m\times m$ array and identifying its outer opposite edges gives
the $d=4m$ family, with $d^2$ data qubits, $d^2$ ancillas, and
$15d^2/2$ CNOTs in twelve layers per cycle.

\clearpage
\begin{figure}[H]
\centering\includegraphics[width=\linewidth]{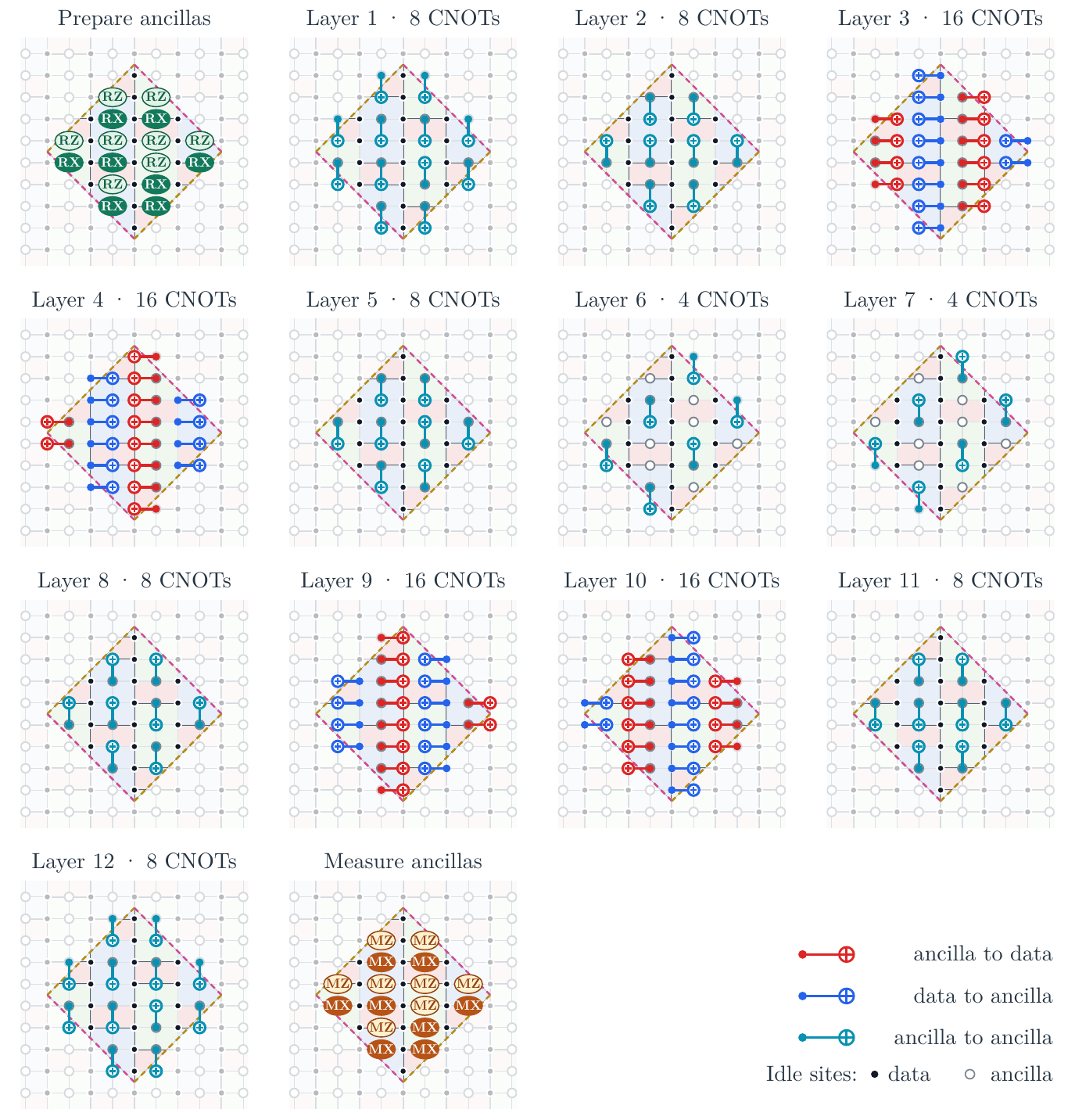}
\caption{Complete CNOT cycle for the 4.8.8 color code on a torus:
preparation, twelve CNOT layers, and measurement, read left to right and
then top to bottom. The cycle repeats unchanged each round.
Matching dashed seams are identified. Gates crossing the periodic boundary
are drawn twice. $\mathrm{RX}/\mathrm{MX}$ are implemented via a Hadamard
combined with $\mathrm{RZ}/\mathrm{MZ}$.
Long connections in the linked viewer denote periodic wraparound.
\TorusCrumbleLink}
\label{fig:torus-cycle}
\end{figure}

\clearpage
\paragraph{Syndrome parities and Pauli-frame updates.}
Tables~\ref{tab:torus-frame}--\ref{tab:torus-logical}
give the syndrome, frame, and logical-sign maps for the illustrated
$[[16,4,4]]$ circuit.
On the torus, the product of all same-Pauli face checks of any two colors
is identity. These relations provide known first-round parities even for
checks whose individual signs are not fixed by data preparation.

\begin{table}[H]
\centering
\begin{minipage}[c]{0.37\linewidth}
\centering\includegraphics[width=\linewidth]{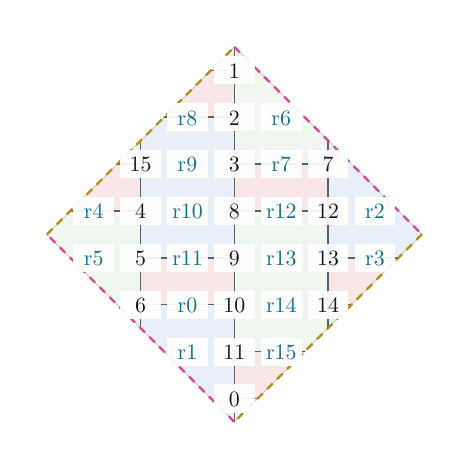}
\par\small Black: data.\quad Blue $r_j$: records.
\end{minipage}\hfill
\begin{minipage}[c]{0.61\linewidth}
\centering\small
\renewcommand{\arraystretch}{1.12}\setlength{\tabcolsep}{4pt}
% Generated by build_torus_frame.py; raw record indices are zero based.
\begin{tabular}{@{}ccc@{}}
\toprule
Record $j$ & $X$ support $S^X_j$ & $Z$ support $S^Z_j$\\
\midrule
$0$ & $\{6,10\}$ & $\varnothing$\\
$1$ & $\varnothing$ & $\{11,15\}$\\
$2$ & $\{6,7,10,11\}$ & $\varnothing$\\
$5$ & $\varnothing$ & $\{4,7,9,12,13,15\}$\\
$7$ & $\varnothing$ & $\{9,10,13,14\}$\\
$8$ & $\{4,6,7,10,12,15\}$ & $\varnothing$\\
$9$ & $\varnothing$ & $\{11,15\}$\\
$10$ & $\{4,6,9,11,12,13,14,15\}$ & $\varnothing$\\
$14$ & $\varnothing$ & $\{10,14\}$\\
\bottomrule
\end{tabular}

\end{minipage}
\caption{Torus Pauli frame. Each row contributes
$[X_{S_j^X}Z_{S_j^Z}]^{r_j}$, with $X_S=\prod_{q\in S}X_q$ and similarly
for $Z_S$.
The index key has the orientation and periodic identifications of
Fig.~\ref{fig:torus-cycle}.}
\label{tab:torus-frame}
\end{table}

\begin{table}[H]
\centering\small
\renewcommand{\arraystretch}{1.12}\setlength{\tabcolsep}{4pt}
% Generated by build_torus_frame.py; raw record indices are zero based.
\begin{tabular}{@{}cllcccc@{}}
\toprule
$f$ & Color & Data support $Q_f$ & $X_{\rm in}$ & $X_{\rm out}$ & $Z_{\rm in}$ & $Z_{\rm out}$\\
\midrule
$0$ & $\text{blue}$ & $\{0,1,6,7,10,11,12,13\}$ & $\{1,7\}$ & $\{1,15\}$ & $\{2\}$ & $\{2\}$\\
$1$ & $\text{green}$ & $\{0,1,2,3,4,5,6,7\}$ & $\{5\}$ & $\{5\}$ & $\{6,8\}$ & $\{0,6\}$\\
$2$ & $\text{red}$ & $\{5,6,9,10\}$ & $\{5,11\}$ & $\{11,14\}$ & $\{0,10\}$ & $\{0,10\}$\\
$3$ & $\text{red}$ & $\{3,7,8,12\}$ & $\{7\}$ & $\{7\}$ & $\{10,12\}$ & $\{2,12\}$\\
$4$ & $\text{blue}$ & $\{2,3,4,5,8,9,14,15\}$ & $\{9,15\}$ & $\{7,9\}$ & $\{10\}$ & $\{10\}$\\
$5$ & $\text{green}$ & $\{8,9,10,11,12,13,14,15\}$ & $\{14\}$ & $\{14\}$ & $\{0,13\}$ & $\{8,13\}$\\
$6$ & $\text{red}$ & $\{1,2,13,14\}$ & $\{3,14\}$ & $\{3,5\}$ & $\{2,8\}$ & $\{2,8\}$\\
$7$ & $\text{red}$ & $\{0,4,11,15\}$ & $\{1,7,9\}$ & $\{1,7,9\}$ & $\{2,4\}$ & $\{4,10\}$\\
\bottomrule
\end{tabular}

\caption{Complete torus face-sign maps. Each data support carries both an
$X$ and a $Z$ stabilizer.}
\label{tab:torus-syndromes}
\end{table}

\begin{table}[H]
\centering\small
\renewcommand{\arraystretch}{1.12}\setlength{\tabcolsep}{4pt}
\begin{minipage}[c]{0.60\linewidth}\centering
% Generated by build_torus_frame.py; raw record indices are zero based.
\begin{tabular}{@{}clclc@{}}
\toprule
$j$ & $\overline X_j$ support & $\lambda^X_j$ & $\overline Z_j$ support & $\lambda^Z_j$\\
\midrule
$0$ & $\{0,1,2,4\}$ & $r_{5}$ & $\{0,3,8,11\}$ & $r_{2}\oplus r_{10}$\\
$1$ & $\{1,2,5,6\}$ & $0$ & $\{3,5,8,9\}$ & $r_{10}$\\
$2$ & $\{3,5,8,9\}$ & $r_{5}\oplus r_{7}$ & $\{1,2,5,6\}$ & $r_{4}\oplus r_{12}$\\
$3$ & $\{0,3,8,11\}$ & $r_{1}\oplus r_{9}$ & $\{0,1,2,4\}$ & $r_{8}\oplus r_{10}$\\
\bottomrule
\end{tabular}

\end{minipage}\hfill
\begin{minipage}[c]{0.38\linewidth}\centering
% Generated by build_torus_frame.py; raw record indices are zero based.
\begin{tabular}{@{}cc@{}}
\toprule
$j$ & Ideal relation $g_j(\mathbf{r})=0$\\
\midrule
$0$ & $r_{1}\oplus r_{5}\oplus r_{7}\oplus r_{9}\oplus r_{14}\oplus r_{15}$\\
$1$ & $r_{0}\oplus r_{2}\oplus r_{6}\oplus r_{8}\oplus r_{10}\oplus r_{13}$\\
$2$ & $r_{0}\oplus r_{2}\oplus r_{4}\oplus r_{8}\oplus r_{10}\oplus r_{12}$\\
$3$ & $r_{1}\oplus r_{3}\oplus r_{9}\oplus r_{11}$\\
\bottomrule
\end{tabular}

\end{minipage}
\caption{Torus logical pairs and record constraints. Left: four logical
pairs and their per-cycle sign increments.
XOR $\lambda_j^P$ from each cycle into the final
data measurement of the indicated logical-$P$ support.
Right: four parity constraints, each ideally zero.}
\label{tab:torus-logical}
\end{table}

To scale the torus circuit to distance $d=4m$, tile the gate pattern in
Fig.~\ref{fig:torus-cycle} over an $m\times m$ array and identify opposite
outer edges. Recompute the logical-sign updates and record constraints in
Table~\ref{tab:torus-logical} for each distance using the binary linear-algebra
construction of Section~\ref{sec:classical-construction}.

\clearpage
\begin{figure}[H]
\centering\includegraphics[width=\linewidth]{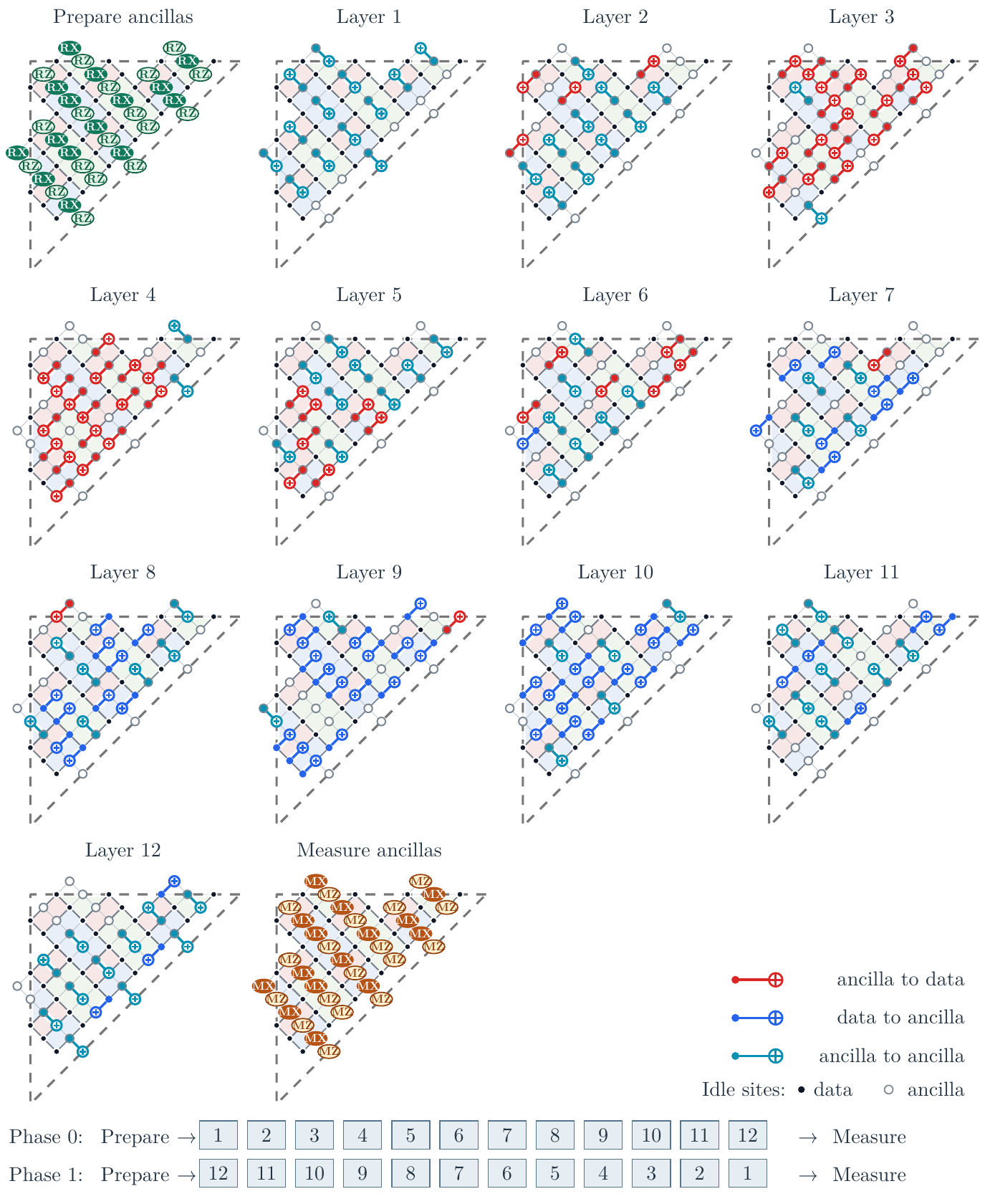}
\caption{Complete CNOT cycle at $d=7$ (31 data and 33 ancillas,
195 CNOTs per cycle). Read preparation, layers $L_1$--$L_{12}$, and
measurement left to right and then top to bottom for phase zero.
Phase one reverses the CNOT-layer order, not the control--target directions.
Both phases use the same preparation and measurement bases and extract
both $X$ and $Z$ checks.
\PlanarCrumbleLink}
\label{fig:cycle}
\end{figure}

\clearpage
\subsubsection{Triangular code circuit}
The triangular code circuit requires a few additional techniques to maintain
the code distance, described below.

\paragraph{Alternating forward--backward cycles to improve distance.}
Alternating a schedule with its reverse can mitigate hook-induced distance
loss, as also discussed for surface codes~\cite{diagonal_schedule}.
Write the first cycle's twelve parallel CNOT matchings as
$L_1,\ldots,L_{12}$. Ancillas are reset in $Z$, with Hadamards on a set
$H_{\rm in}$ before those layers and on $H_{\rm out}$ before terminal $Z$
measurement. The next cycle uses $L_{12},\ldots,L_1$ and exchanges
$H_{\rm in}$ with $H_{\rm out}$. The control and target of each individual
CNOT are unchanged. In the direct local rule shown here $H_{\rm in}=H_{\rm out}$,
so the ancilla basis pattern itself is the same in both phases.

\paragraph{A local scheduling rule with 90 equivalence classes.}
The triangular boundaries and corners introduce local ancilla environments
beyond those in the periodic bulk, requiring separate schedules for these
regions. For $d=4m-1$, we classify ancillas by their position within the
four-by-four bulk period and by the arrangement of data qubits, ancillas,
and vacant sites within Manhattan distance five, in lattice units.
These 90 local-environment classes include the boundary and small-size cases.
Each class specifies a local CNOT pattern and ancilla preparation and
measurement bases.
The rule uses twelve layers at every size, with its $d=7$ instance shown
in Fig.~\ref{fig:cycle}.
Appendix~\ref{app:schedule-spec} gives the coordinates and check supports.
The complete circuit generator, including all boundary and small-size cases,
is provided in the source bundle linked there.

\paragraph{Syndrome parities and Pauli-frame updates.}
The triangular syndrome and frame maps depend on the cycle phase $p=0,1$.
For $d=7$, the 33 ancilla bits contain 30 independent face-syndrome bits
and three individual flag records, $r_{10},r_{25},r_{30}$, each ideally zero.
All records are zero for $+1$ code-space input.
Phase 0 needs only a $Z$ frame, and phase 1
only an $X$ frame. Table~\ref{tab:frame} uses the same convention as
Table~\ref{tab:torus-frame}.

\begin{table}[H]
\centering
\begin{minipage}[c]{0.47\linewidth}
\centering
\includegraphics[width=\linewidth]{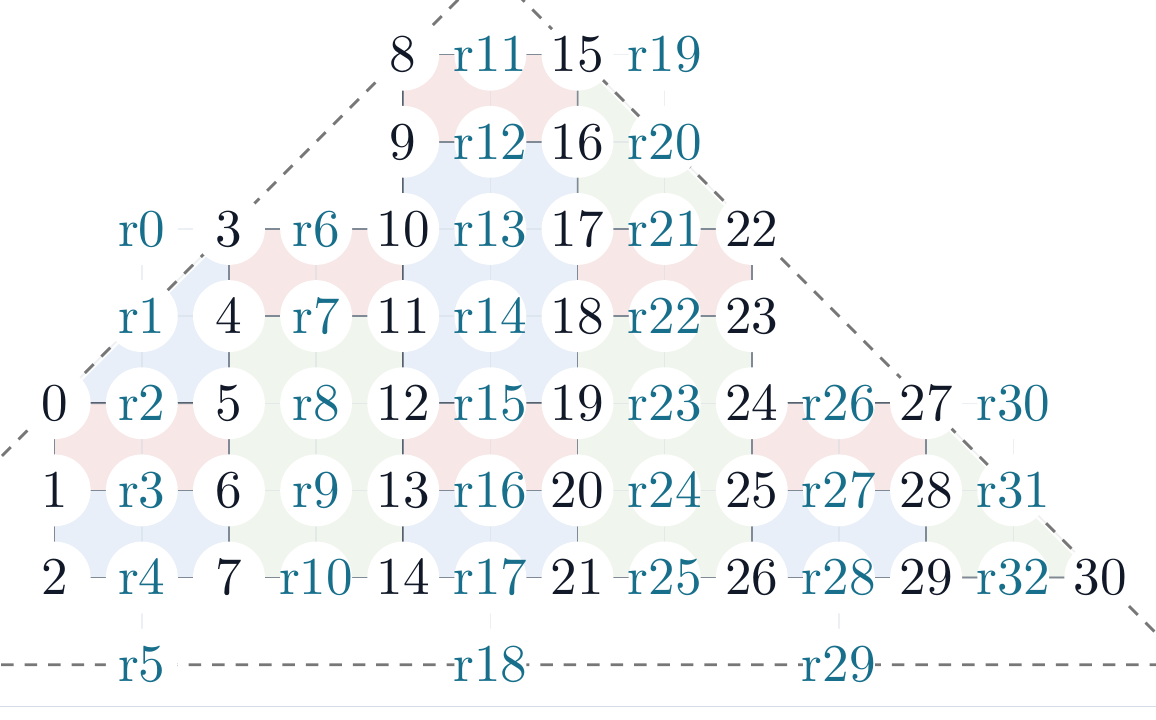}
\par\small Black numbers: data qubits.\quad Blue $r_j$: ancilla records.
\end{minipage}\hfill
\begin{minipage}[c]{0.50\linewidth}
\centering\small
\renewcommand{\arraystretch}{1.12}
\setlength{\tabcolsep}{4pt}
% Generated by build_frame.py; raw ancilla-record indices are zero based.
\begin{tabular}{@{}ccc@{}}
\toprule
Phase 0: $j$ & Phase 1: $j$ & Data support $S$\\
\midrule
$0$ & $1$ & $\{5,6,7,10,14,19,20\}$ \\
$2$ & $3$ & $\{10,11\}$ \\
$7$ & $6$ & $\{5,6,7,14,19,20\}$ \\
$8$ & $9$ & $\{6,7,20,21\}$ \\
$12$ & $11$ & $\{10,11,14,21,22,23\}$ \\
$13$ & $14$ & $\{11,14,19,20,23,26,29,30\}$ \\
$15$ & $16$ & $\{14,21\}$ \\
$19$ & $20$ & $\{16,19,20,21,26,29,30\}$ \\
$21$ & $22$ & $\{19,20,21,26,29,30\}$ \\
$23$ & $24$ & $\{20,21,28,29\}$ \\
$27$ & $26$ & $\{26,29\}$ \\
$31$ & $32$ & $\{29,30\}$ \\
\bottomrule
\end{tabular}

\end{minipage}
\caption{Complete nonidentity frame updates for the illustrated $d=7$ circuit.
Each row contributes $Z_S^{r_j}$ in phase 0 or $X_S^{r_j}$ in phase 1.
Omitted records contribute identity. The key uses the same geometry as
Fig.~\ref{fig:cycle}, rotated by $45^\circ$.}
\label{tab:frame}
\end{table}

For example, a lone $r_2=1$ gives $Z_{10}Z_{11}$ in phase 0, while a lone
$r_3=1$ gives $X_{10}X_{11}$ in phase 1.

\begin{table}[H]
\centering\small
\renewcommand{\arraystretch}{1.12}\setlength{\tabcolsep}{3pt}
% Generated by build_frame.py; each entry J denotes XOR of records r_j, j in J.
\begin{tabular}{@{}lcccccc@{}}
\toprule
 & \multicolumn{3}{c}{Phase 0} & \multicolumn{3}{c}{Phase 1}\\
\cmidrule(lr){2-4}\cmidrule(l){5-7}
Face support $S_f$ & $X_{\rm in}$ & $X_{\rm out}$ & $Z_{\rm in}=Z_{\rm out}$ & $X_{\rm in}=X_{\rm out}$ & $Z_{\rm in}$ & $Z_{\rm out}$\\
\midrule
$\{0,3,4,5\}$ & $\{0\}$ & $\{7\}$ & $\{1\}$ & $\{0\}$ & $\{1\}$ & $\{6\}$\\
$\{0,1,5,6\}$ & $\{2\}$ & $\{2,8\}$ & $\{3\}$ & $\{2\}$ & $\{3\}$ & $\{3,9\}$\\
$\{1,2,6,7\}$ & $\{4\}$ & $\{4\}$ & $\{5\}$ & $\{4\}$ & $\{5\}$ & $\{5\}$\\
$\{3,4,10,11\}$ & $\{7\}$ & $\{0,7,13\}$ & $\{6\}$ & $\{7\}$ & $\{6\}$ & $\{1,6,14\}$\\
$\{4,5,6,7,11,12,13,14\}$ & $\{8\}$ & $\{2,8,15\}$ & $\{9\}$ & $\{8\}$ & $\{9\}$ & $\{3,9,16\}$\\
$\{8,9,15,16\}$ & $\{12,13\}$ & $\{12,13,19\}$ & $\{11\}$ & $\{12\}$ & $\{11,14\}$ & $\{11,14,20\}$\\
$\{9,10,11,12,16,17,18,19\}$ & $\{13\}$ & $\{7,13,21\}$ & $\{14\}$ & $\{13\}$ & $\{14\}$ & $\{6,14,22\}$\\
$\{12,13,19,20\}$ & $\{15\}$ & $\{8,15,23\}$ & $\{16\}$ & $\{15\}$ & $\{16\}$ & $\{9,16,24\}$\\
$\{13,14,20,21\}$ & $\{17\}$ & $\{17\}$ & $\{18\}$ & $\{17\}$ & $\{18\}$ & $\{18\}$\\
$\{15,16,17,22\}$ & $\{19\}$ & $\{12\}$ & $\{20\}$ & $\{19\}$ & $\{20\}$ & $\{11\}$\\
$\{17,18,22,23\}$ & $\{21\}$ & $\{13,21\}$ & $\{22\}$ & $\{21\}$ & $\{22\}$ & $\{14,22\}$\\
$\{18,19,20,21,23,24,25,26\}$ & $\{23\}$ & $\{15,23,27\}$ & $\{24\}$ & $\{23\}$ & $\{24\}$ & $\{16,24,26\}$\\
$\{24,25,27,28\}$ & $\{27\}$ & $\{23,27\}$ & $\{26\}$ & $\{27\}$ & $\{26\}$ & $\{24,26\}$\\
$\{25,26,28,29\}$ & $\{28\}$ & $\{28,31\}$ & $\{29\}$ & $\{28\}$ & $\{29\}$ & $\{29,32\}$\\
$\{27,28,29,30\}$ & $\{31\}$ & $\{27,31\}$ & $\{32\}$ & $\{31\}$ & $\{32\}$ & $\{26,32\}$\\
\bottomrule
\end{tabular}

\caption{Complete triangular face-sign maps at $d=7$, with data and record
labels from Table~\ref{tab:frame}. Columns with equal incoming and outgoing
signs are combined.}
\label{tab:triangular-syndromes}
\end{table}

\begin{samepage}
The logical pair and its phase-dependent increments (left) and ideal-zero
flag relations (right) use the same conventions as Table~\ref{tab:torus-logical}.
The flag relations are identical in both phases.
\begin{center}\small
\renewcommand{\arraystretch}{1.12}\setlength{\tabcolsep}{4pt}
\begin{minipage}[c]{0.73\linewidth}\centering
% Generated by build_frame.py; raw record indices are zero based.
\begin{tabular}{@{}clcc@{}}
\toprule
Phase $p$ & Common $\overline X,\overline Z$ support $\ell$ & $\lambda_p^X$ & $\lambda_p^Z$\\
\midrule
$0$ & $\{0,1,2,3,8,9,10\}$ & $r_{0}\oplus r_{2}\oplus r_{12}$ & $0$\\
$1$ & $\{0,1,2,3,8,9,10\}$ & $0$ & $r_{1}\oplus r_{3}\oplus r_{11}$\\
\bottomrule
\end{tabular}

\end{minipage}\hfill
\begin{minipage}[c]{0.25\linewidth}\centering
% Generated by build_frame.py; applies in both phases.
\begin{tabular}{@{}cc@{}}
\toprule
$j$ & Ideal relation $g_j(\mathbf{r})=0$\\
\midrule
$0$ & $r_{30}$\\
$1$ & $r_{25}$\\
$2$ & $r_{10}$\\
\bottomrule
\end{tabular}

\end{minipage}
\end{center}
\end{samepage}

\subsection{Native-CZ implementation}
\label{sec:native-implementation}
To match the SI1000 gate set~\cite{si1000}, the numerical results use native
CZ gates.
The identity $\operatorname{CNOT}_{c,t}=H_t\operatorname{CZ}_{c,t}H_t$
connects the two descriptions, but compiling each gate separately is
unnecessarily costly: adjacent rotations can cancel, and operations on
disjoint qubits can overlap. Table~\ref{tab:native-costs} gives the costs
of the implementations actually simulated. A circuit moment is one scheduled
time step, in the terminology of SI1000, and may contain simultaneous
operations on disjoint qubits. Figure~\ref{fig:methods} gives the CNOT depth
of the explanatory construction and the total SI1000 time-step count of
the simulated native circuit, which also includes
single-qubit rotations, reset, and measurement. Time steps need not have
equal hardware duration: reset and measurement incur their own error rates
and additional noise on the other qubits (Appendix~\ref{app:simulation}).

\begin{table}[H]
\centering\small
\renewcommand{\arraystretch}{1.18}
\begin{tabular}{@{}lrr@{}}
\toprule
Code & CNOT depth & SI1000 time steps\\
\midrule
\multicolumn{3}{@{}l}{\textit{Planar}}\\
Surface code & 4 & 10\\
6.6.6 color code & 8 & 15\\
4.8.8 color code & 12 & 24\\
\midrule
\multicolumn{3}{@{}l}{\textit{Torus}}\\
Toric code & 4 & 10\\
6.6.6 color code (both quotients) & 10 & 17\\
4.8.8 color code & 12 & 25\\
\bottomrule
\end{tabular}
\caption{Per-cycle costs. The numerical columns give the CNOT-layer count
and the native time-step count, respectively.
Triangular 4.8.8 uses 23 time steps at $d=3$ and 24 at $d\ge7$.
Memory initialization and final measurement
are included in simulations but not in the recurring-cycle counts.}
\label{tab:native-costs}
\end{table}

For surface and 6.6.6 color codes, we compile the CNOT constructions into
CZ interactions with local Hadamards, canceling compatible rotations before
physically lowering $X$ reset and measurement. The torus 6.6.6 comparison
uses the ten-layer CNOT schedule (Appendix~\ref{app:reference-circuits}).

For 4.8.8, the native optimization additionally changes boundary Bell-pair
preparation polarity in the triangle and interior ancilla Clifford rotations
on the torus. Both retain the same entangling-gate count and hardware edges
as the illustrated construction. The native operations are scheduled as
early as their order on each individual qubit allows, followed by a common
ancilla measurement. Packing can stagger gates from one original CNOT layer
across several time steps, producing twenty CZ-bearing steps.

The native circuits also use local Hadamard frames on data and different
ancilla-record conventions. Preparation and final measurement are in the corresponding
physical bases, and the face, flag and logical parities are transformed
consistently. Appendix~\ref{app:native-spec} gives the native scheduling rule
and explicit conversions for the classical tables above. Signed checks of
the incoming and outgoing stabilizers and all logical generators verify
this connection. Circuit-distance evidence additionally accounts for native
fault locations.

\section{Results}
\label{sec:results}
Figure~\ref{fig:comparison} compares \CZTopologyText{} memories from the
surface and color code families using native CZ gates under the SI1000
noise model~\cite{si1000}. The circuits are described in
Section~\ref{sec:methods}.
Simulation, decoder, and statistical
methods are given in Appendix~\ref{app:simulation}.

% Keep the Results opening and comparison together in the available page space.
\begingroup
\setlength{\intextsep}{4pt}
\captionsetup{skip=4pt}
\begin{figure}[H]
\centering\includegraphics[width=0.88\linewidth,trim=0 4bp 0 14bp,clip]{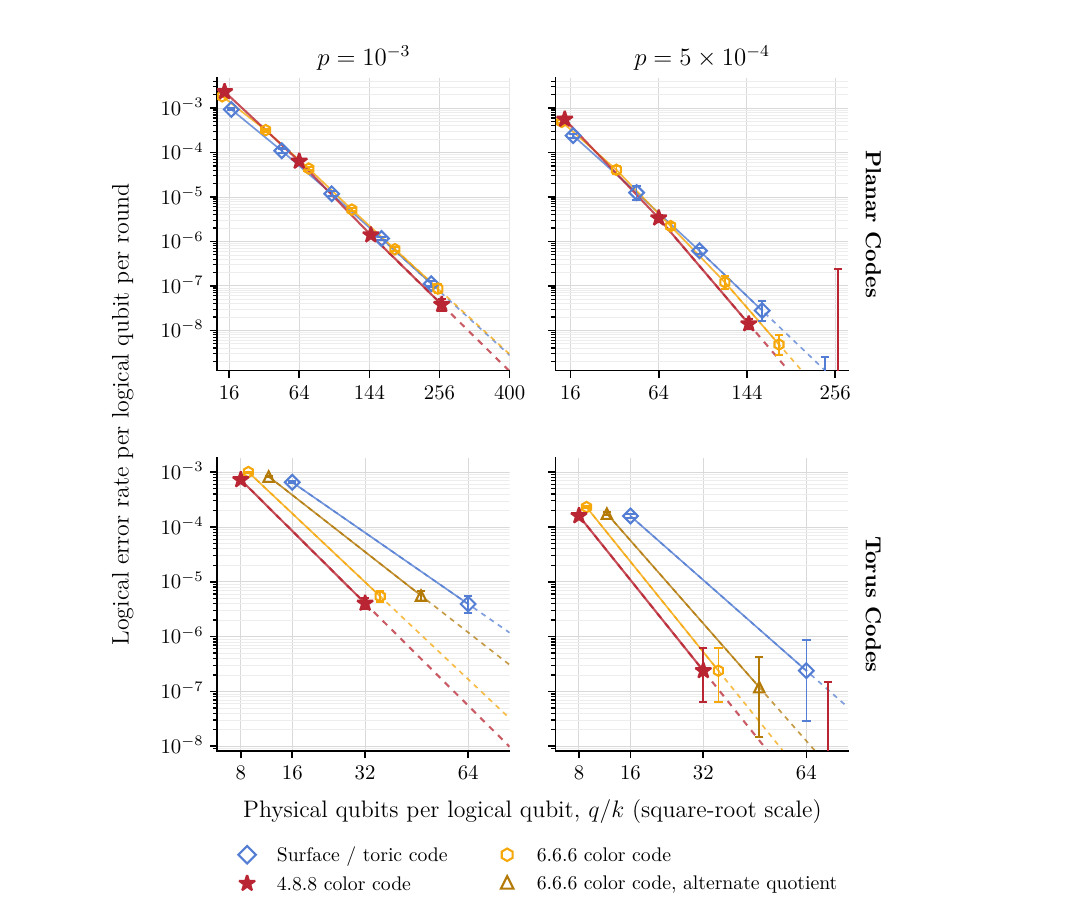}
% Visual description retained for reference and future accessible editions.
% This source comment is not embedded screen-reader text in the compiled PDF.
% Red stars denote planar 4.8.8 color codes; blue denotes surface codes;
% gold denotes 6.6.6 color codes. Both panels use native CZ gates and SI1000 noise.
\caption{Comparison of quantum memories under SI1000 noise with native CZ gates,
at $p=10^{-3}$ (left) and $p=5\times10^{-4}$ (right).
The simulations contribute \CZTotalShots{} shots. Each point pools shots and errors
from the $X$- and $Z$-memory experiments.
Error bars represent 95\% confidence intervals.
The horizontal axis counts total physical qubits per logical
qubit on a square-root scale.
Dashed lines extrapolate $\log\ell$ linearly in $\sqrt{q/k}$ through the two
largest sizes with observed errors. Decoding uses
Tesseract~\cite{tesseract} with parameters described
in Appendix~\ref{app:simulation}.}
\label{fig:comparison}
\end{figure}
\endgroup

We count every touched physical qubit, including extraction ancillas:
$q=n_D+n_A$. The code distance is $d$, and the number of logical qubits
is $k$. We use $d_{\rm circ}$ for the minimum number of
physical fault locations producing an undetected logical error.

\begin{table}[H]
\centering\small
\renewcommand{\arraystretch}{1.25}
\begin{tabularx}{\linewidth}{@{}lcccX@{}}
\toprule
Code / extraction layout & $n_D$ & $q$ & $k$ & Circuit-distance evidence\\
\midrule
\multicolumn{5}{@{}l}{\textit{Planar triangular / rotated patches}}\\
Surface code & $d^2$ & $2d^2-1$ & 1 & Hook-avoiding distance-$d$ schedule~\cite{surface_hook_schedule}\\
6.6.6 color code & $(3d^2+1)/4$ & $(3d^2-1)/2$ & 1 & Reference distance-$d$ construction~\cite{superdense}\\
4.8.8 color code & $(d^2+2d-1)/2$ & $(2d^2+5d-5)/2$ & 1 & Checked at distances $3,7,11,15$\\
\midrule
\multicolumn{5}{@{}l}{\textit{Torus quotients}}\\
Toric code & $d^2$ & $2d^2$ & 2 & Standard distance-$d$ schedule\\
4.8.8 color code & $d^2$ & $2d^2$ & 4 & Checked at distances $4,8$. Upper bound $d_{\rm circ}\leq12$ at $d=12$.\\
6.6.6 color code (best) & $9d^2/8$ & $9d^2/4$ & 4 & Checked at distance $4$ (CNOT construction)\\
6.6.6 color code (alternate) & $3d^2/2$ & $3d^2$ & 4 & Checked at distance $4$ (CNOT construction)\\
\bottomrule
\end{tabularx}
\caption{Resource polynomials for the leading layouts and comparison families.
Torus color code scalings refer to multiples of the $d=4$ representatives.
The triangular 4.8.8 family uses $d=4s-1$.
``Checked'' means matching lower and upper bounds establish $d_{\rm circ}=d$
for $r=d$ extraction cycles in both memory bases and, for triangular 4.8.8,
both starting phases.}
\label{tab:resources}
\end{table}

Triangular 4.8.8 circuits use $60m^2-20m-5$ entangling gates per cycle
for $d=4m-1$, in either representation. At $d=7,11,15$, some qubits
participate in twelve CNOTs per cycle, so scheduling these gates requires
at least twelve layers.

\paragraph{Torus-to-planar overhead.}
As seen in Table~\ref{tab:resources}, the qubit overhead per logical qubit
at matching distance typically doubles asymptotically when moving from
torus to planar codes. The standard 6.6.6 triangle is an exception: its
overhead is $8/3$ times that of the best torus quotient, but twice that of
the torus quotient with larger overhead. Forming a triangular patch requires three
colored boundaries and modified checks at its edges. Changing these checks
can change which Pauli errors act as logical operators and their minimum
weight, so opening a compact torus need not preserve its distance or
overhead.

\section{Discussion and future work}
The 4.8.8 color code reduces the asymptotic qubit-overhead coefficient:
including extraction ancillas, $q/(kd^2)$ approaches $1$ for the planar family,
compared with $3/2$ for the 6.6.6 color code and $2$ for the surface code.
Under the standard subthreshold scaling model, Appendix~\ref{app:crossover}
shows that a smaller coefficient translates into a lower qubit cost at
sufficiently low physical error rates and sufficiently demanding logical-error
targets, even if the subthreshold decay constant is less favorable. The
practical question is where this crossover occurs. Our circuit-level simulations show favorable performance
already at $p=10^{-3}$ and $5\times10^{-4}$
(Figs.~\ref{fig:feature} and~\ref{fig:comparison}).
We conclude with suggested directions for extension and improvement.

\paragraph{Circuit optimization.}
Circuit depth and ancilla count remain targets for optimization. For example, we
found ten-layer CNOT circuits for the $[[64,4,8]]$ torus that correctly
extract the stabilizers in the absence of noise, compared with the
twelve-layer construction used here. However, we were unable to find a
shallower circuit that preserved the full code distance.
Moreover, some ancillas at the boundary of the triangular circuit produce
flag records rather than independent stabilizer syndromes and could
potentially be removed if a suitable distance-preserving extraction schedule
were found.
The search should also consider more compact circuits that sacrifice some
circuit distance because at practical physical error rates, they may achieve lower
logical error rates.

% Keep the short decoding paragraph and its footnote on this page.
\enlargethispage{\baselineskip}
\paragraph{Faster decoding.}
We made little attempt to optimize Tesseract's decoder parameters and did
not use error sparsification or multipass
decoding.\footnote{See the Tesseract pull requests for
\href{https://github.com/quantumlib/tesseract-decoder/pull/254}{error sparsification}
and \href{https://github.com/quantumlib/tesseract-decoder/pull/256}{multipass decoding}.}
Tuning these parameters and exploring
these features could improve the decoding speed--accuracy tradeoff.

The logical error rate scaling we observe with Tesseract provides a baseline
for developing faster near-optimal decoders for our 4.8.8 code memories.
In particular, neural-network decoders~\cite{bell_flagged,lacroix,alphaqubit2}
and pre-decoders~\cite{color_predecoder} have been demonstrated for 6.6.6
color codes. It may be possible to apply these methods to our 4.8.8
color codes while retaining their favorable logical error rate scaling.

\paragraph{Logical operations and denser memories.}
A next step is to implement and benchmark logical Clifford gates and lattice
surgery between our 4.8.8 patches, other color codes, and surface codes,
building on color-code lattice surgery and fault-tolerant code
interfaces~\cite{color_lattice_surgery,code_interface}.
Logical parity measurements could also enable yoking, in which an outer
code protects several logical memories using lattice-surgery
checks~\cite{yoked}.

Another possibility is to encode several logical qubits in a dense planar
layout, as in stellated color and surface codes~\cite{boundaries},
crosshair codes~\cite{crosshair}, and twist-defect packing~\cite{denser_surface}.
For our 4.8.8 memories, an open task is to develop distance-preserving
extraction circuits for such layouts on square-lattice hardware and evaluate
their overhead and logical operations.

\paragraph{Intermediate code distances.}
Our triangular 4.8.8 circuit construction covers $d=4m-1=3,7,11,\ldots$,
whereas the standard rotated surface and triangular 6.6.6 families cover
every odd distance. An open question is whether alternative boundaries and
extraction schedules can realize the other odd-distance class,
$d=4m+1=5,9,13,\ldots$, in this brickwork layout while retaining comparable
qubit overhead and preserving code distance. An analogous question is to
extend our $d=4m$ torus circuit family to intermediate even distances such as
$6,10,\ldots$, using other periodic identifications and compatible local
extraction schedules.

\paragraph{Denser planar 6.6.6 color codes.}
A concrete open problem is to recover the best 6.6.6 torus density in a
planar $k=1$ family with only the factor-two overhead cost:
$n_D=(9/16)d^2+O(d)$, and $q=(9/8)d^2+O(d)$ with superdense extraction,
instead of the usual $q\sim(3/2)d^2$.
Translating this periodic density into a planar construction requires
careful construction of the boundaries.

\paragraph{Exotic stabilizer tilings.}
Square-lattice hardware can support overlapping or nonconvex stabilizer regions.
Larger unit cells and more elaborate extraction circuits further broaden the
search space, partially explored in~\cite{tilecodes,nearestneighbor}. Local check
support alone does not guarantee nearest-neighbor extraction, and extra
circuit depth must earn its cost in gate and idle noise, as routed tile-code
implementations illustrate~\cite{routedtiles}. The program of
Fig.~\ref{fig:methods} can extend beyond face tilings: start from periodic
checks, construct admissible quotients and boundaries, then synthesize and
compare complete noisy memories.

\paragraph{Dynamic and Floquet codes.}
Time-dependent codes enlarge this space further. Floquet codes protect logical
information through an evolving sequence of stabilizer groups~\cite{floquet}.
Extending the search to periodic code deformations and measurement schedules
would require tracking logical information and fault propagation throughout
the full cycle, rather than optimizing extraction for a single fixed code.

\paragraph{Alternative hardware geometries.}
The research program shown in Figure~\ref{fig:methods} could be
carried out on a six-neighbor triangular lattice, a three-neighbor
honeycomb lattice, or a heavy-hex graph with degree-two and degree-three
vertices~\cite{heavyhex}.

\subsection*{Acknowledgments}
I thank Guang Hao Low, Oscar Higgott, Craig Gidney, Dripto M. Debroy, Laleh Aghababaie Beni, Emma Rosenfeld, and Patrick M. Harrington for helpful conversations.
I thank Cody Jones for suggesting to fairly benchmark the scaling of different topological codes using near-optimal decoders.
I also thank Fionn Malone for computational support and Hartmut Neven for creating an environment where this work is possible.

\subsection*{AI acknowledgment}
We used ChatGPT, Claude, and Gemini to help develop software for constructing,
optimizing, simulating, and visualizing the quantum memories studied here
and to prepare the manuscript. Recent improvements in AI capabilities have
dramatically improved these systems' ability to optimize quantum circuits
using tools such as Stim~\cite{stim}, although their scientific writing still has
room for improvement.

\begingroup\small
\makeatletter
\renewcommand{\@biblabel}[1]{[#1]}
\makeatother
\bibliographystyle{rainbowalpha}
\bibliography{references}
\endgroup

\appendix
\section{Simulation and statistical methods}
\label{app:simulation}
We pool shots and errors across the $X$ and $Z$ memories, including both
starting schedules for triangular 4.8.8. The $X$ and $Z$ shot counts differ
by at most \CZMaxBasisShotImbalancePercent\% relative to the smaller count.

A shot fails if any logical observable is decoded incorrectly. From the
fraction of failed shots $P$, we report the error rate per logical qubit
per round,
\[
 \ell=1-(1-P)^{1/(kr)},
\]
where $k$ is the number of logical qubits and $r=d$ is the number of rounds.
Error bars show 95\% Clopper--Pearson confidence intervals converted to
the same scale.

Dashed lines extend the trend through the two largest code sizes with
observed errors, assuming $\log\ell$ decreases linearly with $\sqrt{q/k}$.

\begin{minipage}{\linewidth}
The simulations use Stim~1.16~\cite{stim} and
Tesseract~\cite{tesseract}, pinned to source
\texttt{024db1d3b5b0}. Here is an example invocation of Tesseract
demonstrating the settings:
\begin{lstlisting}[basicstyle=\small\ttfamily,columns=fullflexible,
  keepspaces=true,showstringspaces=false,breaklines=true,
  frame=single,rulecolor=\color{black!20}]
./tesseract --circuit circuit.stim --dem decoder.dem \
  --sample-num-shots N \
  --stats-out stats.json --threads 2 \
  --num-det-orders 21 \
  --det-order-bfs --det-order-coordinate --det-order-index \
  --det-order-seed 518278944 --beam 20 \
  --beam-climbing --no-revisit-dets --pqlimit 1000000
\end{lstlisting}
\end{minipage}

We apply SI1000 noise~\cite{si1000} to the scheduled native circuits:
depolarization $p$ after CZ gates and $p/10$ after single-qubit Cliffords
and on idle qubits, reset error $2p$, and premeasurement error $5p$.
During reset or measurement, every noncollapsed qubit receives additional
independent depolarization $2p$.
Each single-qubit Clifford occupies one time step, including the Hadamards
used with $Z$ reset and measurement to implement $X$-basis preparation and readout.

The 4.8.8 simulations use equal fixed shot targets across memory bases and
starting phases at each distance and physical error rate.

\section{6.6.6 color code circuits}
\label{app:reference-circuits}
For triangular 6.6.6 memories, we use the standard eight-layer superdense
circuit~\cite{superdense} shown in Fig.~\ref{fig:hex-schedules}(a).
For torus memories, we use the ten-layer variant in
Fig.~\ref{fig:hex-schedules}(b), which preserves circuit distance in both
$d=4$ torus codes checked in Table~\ref{tab:resources}.
All 6.6.6 torus simulations in this paper use the native-CZ compilation
of this ten-layer circuit.

\begin{figure}[htbp]
\centering
\input{hex_comparison.tex}
\caption{Superdense extraction on one 6.6.6 hexagon.
(a) The standard eight-layer circuit~\cite[Fig.~2]{superdense}, used for all bulk face colors.
(b) A blue plaquette of the ten-layer circuit used for the torus memories.
The same fourteen CNOTs are ordered differently. Red and green torus plaquettes
use the same interaction pattern with different contact timings, specified in
the source circuits. Column numbers label the global
CNOT layers. Only gates involving the displayed ancilla pair are shown,
so a blank on a data wire may correspond to an interaction with a neighboring ancilla.}
\label{fig:hex-schedules}
\end{figure}

% Adapted from the small-code catalog's overhead argument. Use the iterated
% limit: an unrestricted joint limit ignores the minimum admissible distance.
\section{Asymptotic overhead and the low-noise crossover}
\label{app:crossover}
A lower qubit-overhead coefficient can compensate for weaker subthreshold
error suppression.
A related calculation comparing color- and surface-code lattice surgery at
matched logical error rates appears in Ref.~\cite[Sec.~IV~B]{color_lattice_surgery}.
For memories compared at the same physical noise parameter and logical-error
metric, assume for each family $F$
\begin{equation}
 \ell_F(d,p)=A_F(c_F p)^{d/2},
 \qquad
 \frac{q_F(d)}{k_F}=\Omega_F d^2+O(d),
 \label{eq:crossover-model}
\end{equation}
with fixed $A_F>0$, $c_F>0$, and $\Omega_F>0$, in a sufficiently low-noise
regime where $c_F p<1$. The low-noise decay constant $c_F$ depends on the
code, extraction circuit, noise model, and decoder, and can differ from
the inverse threshold.
Here $\ell_F$ is the logical error rate per logical qubit per round, and $q_F$
includes extraction ancillas. The exponent $d/2$ describes the leading
error-correcting power of distance-preserving extraction.
Admissible distances
are unbounded with bounded gaps, allowing odd distances, $d=4s-1$, or
integer multiples of four.

For a target $\varepsilon$, define the minimum per-logical-qubit cost
\[
 Q_F(\varepsilon,p)=\min_{d:\,\ell_F(d,p)\leq\varepsilon}
 \frac{q_F(d)}{k_F}.
\]
\begin{proposition}[Low-noise crossover]
For two families satisfying Eq.~\eqref{eq:crossover-model} and fixed
$0<p<\min(c_F^{-1},c_G^{-1})$ within the common validity range of the model,
\begin{equation}
 \lim_{\varepsilon\to0^+}\frac{Q_F(\varepsilon,p)}{Q_G(\varepsilon,p)}
 =\frac{\Omega_F}{\Omega_G}
 \left[\frac{\ln[1/(c_G p)]}
 {\ln[1/(c_F p)]}\right]^2.
 \label{eq:crossover-ratio}
\end{equation}
Consequently,
\[
 \lim_{p\to0^+}\lim_{\varepsilon\to0^+}
 \frac{Q_F(\varepsilon,p)}{Q_G(\varepsilon,p)}
 =\frac{\Omega_F}{\Omega_G}.
\]
If $\Omega_F<\Omega_G$, then for every sufficiently small fixed $p>0$,
family $F$ costs fewer qubits for all sufficiently small targets
$\varepsilon<\varepsilon_\star(p)$.
\end{proposition}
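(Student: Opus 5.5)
The plan is to compute the asymptotics of $Q_F(\varepsilon,p)$ directly from the model in Eq.~\eqref{eq:crossover-model}, first for a single family, and then take ratios. Fix $p$ with $c_Fp<1$ and write $\beta_F=\ln[1/(c_Fp)]>0$. Because $\ell_F(d,p)=A_F e^{-\beta_F d/2}$ is strictly decreasing in $d$, the condition $\ell_F(d,p)\le\varepsilon$ is equivalent to
\[
d\ge d_F^*(\varepsilon):=\frac{2}{\beta_F}\ln\frac{A_F}{\varepsilon}.
\]
Let $d_F(\varepsilon)$ be the smallest admissible distance with $d\ge d_F^*(\varepsilon)$. The admissible distances are unbounded with gaps at most some constant $g_F$, so $d_F^*\le d_F(\varepsilon)\le d_F^*+g_F$. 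Hence $d_F(\varepsilon)/d_F^*(\varepsilon)\to1$ as $\varepsilon\to0^+$.

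The one real subtlety is that $Q_F$ is a minimum of $q_F(d)/k_F$ over \emph{all} admissible $d\ge d_F^*$, and $q_F$ need not be monotone. I would handle this using only the asymptotic form. The bound $q_F(d)/k_F=\Omega_Fd^2+O(d)$ with $\Omega_F>0$ gives, for any $\eta>0$, a threshold $D_\eta$ such that $(1-\eta)\Omega_Fd^2\le q_F(d)/k_F\le(1+\eta)\Omega_Fd^2$ for all $d\ge D_\eta$. Once $d_F^*>D_\eta$, every admissible feasible $d$ satisfies
\[
\frac{q_F(d)}{k_F}\ge(1-\eta)\Omega_F(d_F^*)^2,
\]
so $Q_F\ge(1-\eta)\Omega_F(d_F^*)^2$. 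Evaluating at the particular choice $d=d_F(\varepsilon)$ gives $Q_F\le(1+\eta)\Omega_F(d_F^*+g_F)^2$. Letting $\varepsilon\to0$ and then $\eta\to0$ yields
\[
Q_F(\varepsilon,p)\sim\Omega_F\,\frac{4}{\beta_F^2}\ln^2\frac{A_F}{\varepsilon}.
\]

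Take the same asymptotic for $G$ and divide. Since $\ln(A_F/\varepsilon)/\ln(A_G/\varepsilon)\to1$ as $\varepsilon\to0$, the constants $A_F,A_G$ drop out, and the ratio tends to $(\Omega_F/\Omega_G)(\beta_G/\beta_F)^2$. This is exactly Eq.~\eqref{eq:crossover-ratio}.

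For the $p\to0^+$ limit, write $\beta_G/\beta_F=(\ln(1/p)-\ln c_G)/(\ln(1/p)-\ln c_F)$. This tends to $1$ because $\ln(1/p)\to\infty$, so the iterated limit equals $\Omega_F/\Omega_G$.

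For the final claim, assume $\Omega_F<\Omega_G$. By the limit just proved, there is a $p_0$ such that for every $p<p_0$ (also below $\min(c_F^{-1},c_G^{-1})$) the inner limit in Eq.~\eqref{eq:crossover-ratio} is strictly less than $1$. For each such fixed $p$, the definition of that inner limit then supplies $\varepsilon_\star(p)$ with $Q_F<Q_G$ for all $\varepsilon<\varepsilon_\star(p)$.

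The step I expect to need the most care is the two-sided bound on the minimum $Q_F$. It must use the lower bound on $q_F$ uniformly over all large $d$, rather than only at the smallest feasible distance, and it must respect the iterated order of limits, taking $\varepsilon\to0$ at fixed $p$ before $p\to0$.
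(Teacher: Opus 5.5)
Your proposal is correct and follows essentially the same route as the paper. Both arguments invert the error model to get the distance requirement $D_F=2\ln(A_F/\varepsilon)/\ln[1/(c_Fp)]$, round up to an admissible distance at $O(1)$ cost, bound the minimum cost from both sides using the quadratic footprint, and then take the ratio and the iterated limits; your $(1\pm\eta)$ sandwich simply spells out the two-sided bound the paper states as $Q_F=\Omega_F D_F^2+O(D_F)$, including the point that the lower bound must hold for every feasible $d$, not only the smallest one.
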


\begin{proof}
Inverting the error model gives the continuous distance requirement
\[
 D_F(\varepsilon,p)=
 \frac{2\ln(A_F/\varepsilon)}{\ln[1/(c_F p)]}.
\]
At fixed $p$, this grows without bound as $\varepsilon\to0$.
Every eligible distance is at least $D_F$, while rounding up to an admissible
distance costs only $O(1)$. The quadratic footprint then bounds the minimum
cost from both sides, giving $Q_F=\Omega_F D_F^2+O(D_F)$.
Since $\ln(A_F/\varepsilon)/\ln(A_G/\varepsilon)\to1$,
the cost ratio tends to Eq.~\eqref{eq:crossover-ratio}.
Finally, $\ln[1/(c_G p)]/\ln[1/(c_F p)]\to1$
as $p\to0$. If $\Omega_F/\Omega_G<1$, the limiting ratio is below one
for sufficiently small fixed $p$, and so is the finite-target ratio for
sufficiently small $\varepsilon$.
\end{proof}

\section{Triangular code geometry}
\label{app:schedule-spec}
% Generated by build_schedule_spec.py; edit its source, not this file.
\paragraph{Coordinates and checks.}
The following integer coordinates specify the triangular family at any
$d=4m-1$, $m\geq1$. Set $R=-4m-1$ and
\[
 L_j=R-\min(2j+3,\,8m-2j-3)+1.
\]
Data qubits occupy $(x,6+2j)$ for $0\leq j\leq4m-2$ and
$L_j\leq x\leq R$. Ancillas occupy $(x,7+2j)$ for
$0\leq j\leq4m-3$, with
\[
\begin{array}{c|cc}
 &\text{left endpoint}&\text{right endpoint}\\\hline
 j\ \text{even}&L_{j+1}&R+1\\
 j\ \text{odd}&L_j&R
\end{array}
\]
and every integer $x$ between these endpoints. Only nearest-neighbor
couplers between occupied sites are needed. The triangular record and
Pauli-frame tables number data qubits and ancillas separately from zero,
ordered first by increasing $y$ and then by increasing $x$.
These source axes have north equal to $+y$ and east equal to $+x$.
The triangular circuit figures use the rigid rotation
$(x,y)\mapsto(y,-x)$, with all gates and qubit indices unchanged.

The check supports are translates of
\[
\begin{aligned}
 Q&=\{0,1\}\mathbin\times\{0,2\}, &
 O&=\{0,1,2,3\}\mathbin\times\{0,2\},\\
 U&=\{(0,2),(1,2),(2,0),(2,2)\},&
 V&=\{(0,0),(1,0),(2,0),(2,2)\}.
\end{aligned}
\]
Include $Q+(x,y)$ at every data site with $x-y\equiv3\pmod4$
when the translated support is wholly present, and also at
$(x,y)=(R-1,6+4h)$ for $0\leq h\leq2m-2$.
Include $O+(x,y)$ when $x$ is even, $x\equiv y\pmod4$, and its
support is wholly present. Finally, include
$U+(-4m-5-4h,6+4h)$ for $0\leq h\leq m-2$ and
$V+(-8m+1+4h,4m+4+4h)$ for $0\leq h\leq m-1$.
Each support supplies both an $X$ and a $Z$ check.
Order faces by their minimum $y$, minimum $x$, support size, and finally
the lexicographically sorted coordinate list, breaking ties in that order.
Products of $X$, or of $Z$, over all data qubits represent logical $X$ and $Z$.

The complete distance-parameterized circuit generator, including the
boundary schedules, and the Stim circuits used in the simulations are
available in the
\href{https://doi.org/10.5281/zenodo.22820614}{Zenodo source bundle}.

\section{Native-CZ schedules and classical conventions}
\label{app:native-spec}
We use the superdense frame update and syndrome-parity matrices of
Section~\ref{sec:classical-construction}. Write the native ancilla record
as $\mathbf{w}$ and the corresponding CNOT-table record as $\mathbf{r}$, both in ascending
ancilla-index order. A memory contains $T$ cycles.

\subsection{Scheduling and physical bases}
\label{sec:native-packing}
Within each cycle, omit identity rotations and schedule each gate as early
as possible while preserving the operation order on every qubit. Measure
all ancillas together immediately after the final gate.

In a data Hadamard frame $U$, a Pauli $S$ from the CNOT description
is represented physically as $USU^\dagger$. This preserves the Pauli support
and the code distance.
For a logical-$P$ memory,
prepare and finally measure each data qubit in the physical basis specified
by the input and output frames, using $Z$ reset/measurement and actual
Hadamards where required. Intermediate frame choices are already implemented
by the native gate sequence.

\subsection{Triangular 4.8.8 circuit}
Begin with the forward CNOT schedule from the source bundle linked in
Appendix~\ref{app:schedule-spec}.
Let $d=4m-1$, and let $x_R=-(4m+1)$ be its rightmost data coordinate.
For each first-layer Bell pair
\[
 a=(x_R-2,9+4j),\quad b=(x_R-1,9+4j),\qquad
 j=0,\ldots,2m-3,
\]
replace $\operatorname{CNOT}_{a,b}$ by
$\operatorname{CNOT}_{b,a}$ and exchange the two preparation bases.
The set is empty at $d=3$. This replaces preparation from $|+0\rangle$
by preparation from $|0+\rangle$ and produces the same Bell state.
Define phase 1 by reversing the modified CNOT layers and exchanging its
preparation and measurement bases.

Let $D$ be the set of all data qubits. The input/output data frames are
$(H_D,I)$ in phase 0 and $(I,H_D)$ in phase 1. After resetting the ancillas,
initialize a pending set $B$ to the symmetric difference of the preparation
Hadamard set and the input data frame. For each CNOT matching, let $\mathcal{T}$ be its
targets and $A$ all its endpoints. Apply $H$ on $(B\mathbin\triangle\mathcal{T})\cap A$,
apply CZ on the matching's edges, and replace
$B\leftarrow(B\setminus A)\cup\mathcal{T}$. After the last matching, apply $H$ on
$B\mathbin\triangle H_{\rm measure}\mathbin\triangle H_{\rm output}$.
Finally pack the operations by Section~\ref{sec:native-packing}.
This retains the original entangling count while giving the 24-moment
cycle (23 at $d=3$).

\paragraph{Frame and syndrome conversion.}
In phase 0, substitute $r_j=w_j$. In phase 1, exchange the records within
each Bell pair reversed above. At $d=7$ this is
\[
 r_8=w_9,\quad r_9=w_8,\quad r_{23}=w_{24},\quad r_{24}=w_{23},
 \qquad r_j=w_j\text{ otherwise}.
\]
Let $\Pi_p$ denote this record permutation in phase $p$. The
CNOT-description frame bits and logical-sign updates are
\[
 \mathbf{x}=F^X\Pi_p\mathbf{w},\qquad \mathbf{z}=F^Z\Pi_p\mathbf{w},\qquad
 \boldsymbol{\lambda}^P=\Lambda^P\Pi_p\mathbf{w},
\]
using the matrices for the selected cycle phase.
Apply the same substitution to the syndrome-parity matrices
$A^{P,\mathrm{in/out}}$ and record constraints $C$.
The physical correction is
\[
 \mathcal{P}_p^{\rm native}=U_p^{\rm out}\mathcal{P}_p(U_p^{\rm out})^\dagger.
\]
Thus phase 1 also exchanges $X$ and $Z$ in the physical frame, whereas
phase 0 does not.

The simulations use the all-data logical representative. At $d=7$, for
either Pauli type, it is the table's logical multiplied by face checks
$4,9,11,14$. Its sign increment is consequently
\[
 \lambda_{D,p}^{P}=\lambda_p^P
 \oplus\bigoplus_{f\in\{4,9,11,14\}}
 \left[s_f^{P,\mathrm{in}}\oplus s_f^{P,\mathrm{out}}\right].
\]
For larger codes choose $L^X=L^Z=(1,\ldots,1)$. Including separate data
preparation and readout, a memory of $T=d$ cycles has $24T+3$ moments
($23T+3$ at $d=3$).

\subsection{Torus 4.8.8 circuit}
Remove the arrows from the twelve illustrated CNOT matchings to obtain
CZ edge sets $E_1,\ldots,E_{12}$. Before packing, the cycle is
\[
 R_A,\ L_0,\ \operatorname{CZ}(E_1),\ L_1,\ldots,
 \operatorname{CZ}(E_{12}),\ L_{12},\ M_A.
\]
Table~\ref{tab:native-torus-rotations} specifies the ancilla rotations in
each $L_j$, repeated in every primitive cell. Apply $H$ to every data qubit
in $L_3$ and $L_9$, and no data rotations in the other $L_j$.
The signed Pauli actions in the key fix each single-qubit Clifford up to
irrelevant global phase. Packing this sequence gives 25 moments.

\begin{table}[H]
\centering\small
\setlength{\tabcolsep}{4pt}\renewcommand{\arraystretch}{1.08}
% Generated by native_torus_audit.py after source-bound reconstruction.
\begin{tabular}{@{}c*{16}{c}@{}}
\toprule
$j$ & $0$ & $1$ & $2$ & $3$ & $4$ & $5$ & $6$ & $7$ & $8$ & $9$ & $10$ & $11$ & $12$ & $13$ & $14$ & $15$\\
\midrule
$0$ & $H$ & $H$ & $H$ & $H$ & $H$ & $H$ & $H$ & $H$ & $H$ & $H$ & $H$ & $H$ & $H$ & $H$ & $H$ & $H$\\
$1$ & $H$ & $H$ & $I$ & $I$ & $I$ & $I$ & $H$ & $H$ & $H$ & $H$ & $I$ & $I$ & $I$ & $I$ & $H$ & $H$\\
$2$ & $H$ & $I$ & $I$ & $H$ & $H$ & $I$ & $I$ & $H$ & $H$ & $I$ & $I$ & $H$ & $H$ & $I$ & $I$ & $H$\\
$3$ & $I$ & $I$ & $I$ & $I$ & $I$ & $I$ & $I$ & $I$ & $I$ & $I$ & $I$ & $I$ & $I$ & $I$ & $I$ & $I$\\
$4$ & $B$ & $I$ & $B$ & $V$ & $A$ & $H$ & $I$ & $H$ & $B$ & $I$ & $V$ & $A$ & $B$ & $I$ & $V$ & $I$\\
$5$ & $A$ & $I$ & $A$ & $B$ & $V$ & $A$ & $I$ & $B$ & $I$ & $I$ & $A$ & $V$ & $V$ & $H$ & $I$ & $I$\\
$6$ & $I$ & $A$ & $H$ & $I$ & $I$ & $H$ & $V$ & $I$ & $I$ & $H$ & $B$ & $I$ & $I$ & $B$ & $B$ & $V$\\
$7$ & $I$ & $B$ & $I$ & $I$ & $I$ & $I$ & $A$ & $I$ & $V$ & $B$ & $I$ & $I$ & $I$ & $I$ & $V$ & $I$\\
$8$ & $H$ & $I$ & $V$ & $I$ & $I$ & $V$ & $I$ & $A$ & $A$ & $I$ & $B$ & $I$ & $A$ & $I$ & $H$ & $B$\\
$9$ & $I$ & $I$ & $I$ & $I$ & $I$ & $I$ & $I$ & $I$ & $I$ & $I$ & $I$ & $I$ & $I$ & $I$ & $I$ & $I$\\
$10$ & $H$ & $I$ & $I$ & $H$ & $H$ & $I$ & $I$ & $H$ & $H$ & $I$ & $I$ & $H$ & $H$ & $I$ & $I$ & $H$\\
$11$ & $I$ & $I$ & $H$ & $H$ & $H$ & $H$ & $I$ & $I$ & $I$ & $I$ & $H$ & $H$ & $H$ & $I$ & $H$ & $I$\\
$12$ & $H$ & $H$ & $H$ & $H$ & $H$ & $H$ & $H$ & $H$ & $H$ & $H$ & $H$ & $H$ & $H$ & $H$ & $H$ & $H$\\
\bottomrule
\end{tabular}
\par\smallskip
\begin{tabular}{@{}ccll@{}}
\toprule
Symbol & Gate & $X\mapsto$ & $Z\mapsto$\\
\midrule
$I$ & identity & $X$ & $Z$\\
$H$ & Hadamard & $Z$ & $X$\\
$A$ & $C_{XYZ}$ & $Y$ & $X$\\
$B$ & $C_{ZYX}$ & $Z$ & $Y$\\
$V$ & $\sqrt{X}$ & $X$ & $-Y$\\
\bottomrule
\end{tabular}

\caption{Native torus rotations. Columns identify the sixteen ancillas by their
record indices from Table~\ref{tab:torus-frame}, and row $j$ gives the rotations in $L_j$.
The key specifies conjugation $P\mapsto UPU^\dagger$.}
\label{tab:native-torus-rotations}
\end{table}

\paragraph{Frame and syndrome conversion.}
Within the illustrated cell, define the bit vector $\mathbf{b}$ and operators $Q,U$ by
\[
 \begin{split}
 b_j&=1\text{ for }j\in\{1,2,3,7,9,13,15\},\quad b_j=0\text{ otherwise},\\
 Q&=X_{\{6,7,10,11\}}Z_{\{9,10,13,14\}},\qquad
 U=H_{\{4,5,6,7,12,13,14,15\}}.
 \end{split}
\]
Repeat these subsets under the same cell translations. With $K(\mathbf{r})$ denoting
the conditional data operation of the CNOT construction, the native
conditional operation is, up to branch-global phase,
\[
 K_{\rm native}(\mathbf{w})=UQK(\mathbf{w}\oplus\mathbf{b})U.
\]
Evaluate the matrices of Section~\ref{sec:classical-construction}
at $\mathbf{r}=\mathbf{w}\oplus\mathbf{b}$. The fixed Pauli $Q$ then gives
\[
\begin{aligned}
 s_f^{P,\mathrm{in,native}}
 &=\bigl[A^{P,\mathrm{in}}(\mathbf{w}\oplus\mathbf{b})\bigr]_f,\\
 s_f^{P,\mathrm{out,native}}
 &=\bigl[A^{P,\mathrm{out}}(\mathbf{w}\oplus\mathbf{b})\bigr]_f
   \oplus\epsilon_Q(S_f^P),\\
 \lambda_j^{P,\mathrm{native}}
 &=\bigl[\Lambda^P(\mathbf{w}\oplus\mathbf{b})\bigr]_j
   \oplus\epsilon_Q(L_j^P),
\end{aligned}
\]
where $QS Q^\dagger=(-1)^{\epsilon_Q(S)}S$.
These signs refer to the physical operators $US_f^PU^\dagger$
and $UL_j^PU^\dagger$. The ideal constraints become
$C(\mathbf{w}\oplus\mathbf{b})=\mathbf{0}$, and the physical Pauli frame is
\[
 \mathcal{P}^{\rm native}=U \mathcal{P}QU.
\]
Larger tori use the same local conversions with their own
syndrome-parity and superdense frame update matrices.

For memory initialization and termination, reset all data together with
the first ancilla reset and measure all data with the last ancilla
measurement. In zero-based cycle moments, place the initial data Hadamards
in moment 1 and the final data Hadamards in moment 23. Their support is
the support of $U$, complemented on all data for an $X$ memory.
The complete memory therefore occupies $25T$
moments, including initialization and final measurement.

\end{document}